\tikzstyle{every picture}=[
\def\cqfd{\hfill $\qed$}
\newcommand{\cf}[1]{#1_{\mathrm{sf}}}
\def\sc{\mathrm{sc}}
\def\ie{\emph{i.e.} }
\newcommand{\ugh}{\mathbin{\textlinb{\BPwheel}}}
\title{The state complexity of a class of operations involving  roots and boolean operations}
\author{Pascal Caron \thanks{Pascal.Caron@univ-rouen.fr}, Edwin Hamel-de-le-court
    \thanks{Edwin.Hamel-de-le-court@etu.univ-rouen.fr}, and
    Jean-Gabriel Luque\thanks{Jean-Gabriel.Luque@univ-rouen.fr}
}
\institute{LITIS, Université de Rouen,\\ Avenue de l'Université,\\ 76801 Saint-\'Etienne du Rouvray Cedex,\\ France}
\begin{document}
\maketitle

\begin{abstract}
Modifiers are a sets of functions acting on tuple of automata and allowing one  to construct regular operations. We define and study the class of  friendly modifiers that  describes a class of regular operations  involving  compositions of boolean operations and roots. We also  give an explicit tight bound for the state complexity of these operations.
\end{abstract}

\section{Introduction}
The state complexity of a regular language is the size of its minimal (complete deterministic) automaton and the state complexity of a regular operation is the maximal one of those languages obtained by applying this operation onto languages of fixed state complexities.
The research on this subject dates back to the $70s$, when in a seminal paper  \cite{Mas70}, Maslov gave values (without proofs) of the state complexities of several operations, in particular of the square root. Since the $90s$, this area of research has been very active and the state complexity of numerous operations has been computed, see \emph{e.g} \cite{Dom02,GMRY17,JJS05,Jir05,JO08,Yu01a} for a survey.

A method that applies to a wide class of operations, called \emph{1-uniform} operations, was described independently in \cite{CHLP20} and \cite{Dav18}. This approach consists  in describing a $1$-uniform operation as  functions, called \emph{modifiers}, acting on automata satisfying some nice conditions. It allows us to describe  states of the resulting automaton as combinatorial objects and to compute an upper bound for its state complexity by enumerating them.
 Therefore, we obtain a tight bound by exhibiting a witness chosen in a pool of automata, called  {\it monsters},  whose  sets of transition functions contain all possible functions of their states.
 
 Krawetz \emph{et al.} \cite{KLS05} computed the state complexity of $\mathbf{Root}(L)$, the set of  words $w$ such that $w^{i}$ is in L for some positive integer $i$, which is the same as the state complexity of square root \cite{CHLP20}. This naturally raises the  question of whether this result can be generalized for any compositions of  boolean operations and roots.

We show a correspondence between  these operations, a class of modifiers, called \emph{friendly}, and subsets of  eventually periodic boolean sequences.
   From this correspondence,  we deduce a tight  bound for the state complexities of these operations.\\
The paper is organized as follows. Section \ref{sect-prel} contains definitions and notations about automata. In Section \ref{sec-mod}, we recall  definitions and  basic facts on modifiers and $1$-uniform operations. In Section \ref{sect-friend}, we define and study friendly modifiers and the operations they describe, in particular by introducing the notions of standard modifier and characteristic functions. Our main result is a one to one correspondence between standard friendly modifiers and operations, also called friendly, obtained by combining roots and boolean operations. Finally, In Section \ref{sec-sc}, we give  tight bounds for state complexities of friendly operations.
\section{Preliminaries\label{sect-prel}}
\subsection{Operations over sets and sequences}
The \emph{cardinality} of a finite set $E$ is denoted by $\#E$,  the \emph{set of subsets} of  $E$ is denoted by $2^E$ and the \emph{set of mappings} of $E$ into itself is denoted by $E^E$. The identity mapping from $E$ into itself is denoted $\mathrm{Id}_E$. For any non-negative integer $k$, we denote by $E^k$ the set of all $k$-tuples of elements of $E$. The \emph{symmetric difference} of two sets $E_1$ and $E_2$ is denoted by $\oplus$ and defined by $E_1\oplus E_2=(E_1\cup E_2)\backslash (E_1\cap E_2)$. For any positive integer $n$, let us denote $\llbracket n\rrbracket$ for $\{0,\ldots, n-1\}$. 

 For any two $k$-tuples of functions $\underline\phi=(\phi_1,\ldots,\phi_k)\in G_{1}^{F_1}\times\cdots\times G_{k}^{F_{k}}$ and $\underline\psi=(\psi_1,\ldots,\psi_k)\in F_{1}^{E_1}\times\cdots\times F_{k}^{E_{k}}$, we denote by $\underline\phi\circ\underline\psi=(\phi_1\circ\psi_1,\ldots,\phi_k\circ\psi_k)\in  G_{1}^{E_1}\times\cdots\times G_{k}^{E_{k}}$ the \emph{point by point composition} of $\underline\psi$ by $\underline\phi$. This operation should not be confused with the \emph{composition}  defined as follow: for any two functions $f: E^j\rightarrow E$ and $g:F^k\rightarrow E$ and for any $1\leq p\leq j$, the $p$th \emph{composition} of $g$ by $f$ is the 
 function  $f\circ_{p}g: E^{j+k-1}\rightarrow E$ defined by
\[f\circ_{p}g(e_{1},\dots,e_{j+k-1})=f(e_{1},\dots,e_{p-1},g(e_{p},\dots,e_{p+k-1}),e_{p+j},\dots,e_{j+k-1}),\]
for any $e_{1},\dots, e_{j+k-1}\in E$.

For any set $E$, a \emph{sequence} with values in $E$ is a function $u$ from $\mathbb N$ into $E$. For every $p\in\mathbb N$, we denote $u(p)$ by $u_p$, and the sequence $u$ is often denoted by $(u_p)_{p\in\mathbb N}$. A sequence $u$ is \emph{eventually periodic} if there exists $a,b\in\mathbb N$ such that, for any integer $p\geq a$, $g_{p+b}=g_p$.

\subsection{Languages and Automata}
Let $\Sigma$ be a finite set of symbols, called \emph{alphabet}, the elements of which are called \emph{letters}. A \emph{word} $w$ over $\Sigma$ is a finite sequence of letters of $\Sigma$. Let us denote by $\varepsilon$ the empty word. The catenation of two words $u=a_1\cdots a_n$ and $v=b_1\cdots b_m$ denoted by $u\cdot v$ or $uv$ is the word $a_1\cdots a_nb_1\cdots b_n$. We define $w^n$ inductively as $w\cdot w^{n-1}$ with $w^0=\varepsilon$.

The set of all finite words over $\Sigma$ is denoted by $\Sigma ^*$. A \emph{language} is a subset of $\Sigma^*$. We define the \emph{complementary} of a language $L\subseteq\Sigma^{*}$ by $\ ^{c}L=\Sigma^{*}\setminus L$. For any $n\in\mathbb N$, we also define the $n$-th root of a language $L$ by $\sqrt[n]{L}=\{w\in\Sigma^*\mid w^n\in L\}$. Notice that $\sqrt[0]{L}=\Sigma^{*}$ if $\varepsilon\in L$ and $\emptyset$ otherwise, and $\sqrt[1]L=L$. By convention, we denote $\sqrt[2] L$ by $\sqrt L$.

A  \emph{complete and deterministic finite automaton} (DFA) is a $5$-tuple $A=(\Sigma,Q,i,F,\delta)$ where $\Sigma$ is the input alphabet, $Q$ is a finite set of states, $i\in Q$ is the  initial states, $F\subset Q$ is the set of final states and $\delta$ is the transition function  from  $Q\times \Sigma$ to $Q$ and defined for every $q\in Q$ and every $a\in \Sigma$. 
 The cardinality of $A$ is the cardinality of its set of states, \ie $\#A=\#Q$. 

Let $A=(\Sigma,Q,i,F,\delta)$ be a DFA. A word $w\in \Sigma ^*$ is \emph{recognized} by the DFA $A$ if $\delta(i,w)\in F$. The \emph{language recognized} by a DFA $A$ is the set $\mathrm L(A)$ of words recognized by $A$. Two DFAs are said to be \emph{equivalent} if they recognize the same language.

For any word $w$, we denote by $\delta^w$ the function $q\rightarrow\delta(q,w)$. Two states $q_1,q_2$ of $D$ are \emph{equivalent} if for any word $w$ of $\Sigma^*$, $\delta(q_1, w)\in F$ if and only if $\delta(q_2, w)\in F$. This equivalence relation is called the \emph{Nerode equivalence} and is denoted by $q_1\sim_{Ner} q_2$. If two states are not equivalent, then they are called \emph{distinguishable}.

A state $q$ is \emph{accessible} in a DFA  if there exists a word $w\in \Sigma ^*$ such that $q=\delta(i,w)$. A DFA is  \emph{minimal} if there does not exist any equivalent  DFA  with less states and it is well known that for any DFA, there exists a unique, up to a relabeling of the states, minimal equivalent one \cite{HU79}. Such a minimal DFA  is  obtained from $D$ by computing $\widehat A_{/\sim}=(\Sigma,Q/\sim,[i],F/\sim,\delta_{\sim})$ where $\widehat A$ is the accessible part of $A$, and where, for any $q\in Q$, $[q]$ is the $\sim$-class of the state $q$ and satisfies the property  $\delta_{\sim}([q],a)=[\delta(q,a)]$, for any $a\in \Sigma$.

\subsection{Languages Operations and State Complexity}
We consider that a $k$-ary \emph{operation over languages} (for short an operation) is a map sending every $k$-tuple of languages over the same alphabet to a language over the same alphabet as its preimage.  A $k$-ary operation is \emph{regular} if it sends every  $k$-tuple of regular languages to a regular language.
The state complexity of a regular language $L$ denoted by $\sc(L)$ is the number of states of its minimal DFA. This notion extends to regular operations: the state complexity of a unary regular operation $\otimes$ is the function $\sc_{\otimes}$ such that, for all $n\in\mathbb N\setminus 0$, $\sc_{\otimes}(n)$ is the maximum of all the state complexities of $\otimes(L)$ when $L$ is of state complexity $n$, \ie $\sc_{\otimes}(n)=\max\{\mathrm{sc}(\otimes(L)) | \sc(L) = n\}$.
More generally, the state complexity of a $k$-ary operation $\otimes$ is the $k$-ary function $\sc_\otimes$ such that, for all $(n_1,\ldots,n_k)\in (\mathbb N\setminus 0)^k$, \ie  $\sc_\otimes(n_1,\ldots,n_k)=\max\{\sc(\otimes(L_1,\ldots,L_k)\mid\text{ for all }i\in\{1,\ldots,k\}, \sc(L_i)=n_i\}.$
A \emph{ witness }for $\otimes$ is a a way to assign to each $(n_1,\ldots,n_k)$, assumed sufficiently big, a $k$-tuple of languages $(L_1,\ldots,L_k)$, over the same alphabets, with $\mathrm{sc}(L_i)=n_i$, for all $i\in\{1,\ldots,k\}$, such that $\sc_\otimes(n_1,\ldots,n_k)=\sc(\otimes(L_1,\ldots,L_k))$.
\subsection{1-uniform morphisms}
Let $\Sigma$ and $\Gamma$ be two alphabets. A morphism is a function $\phi$ from $\Sigma^*$ to $\Gamma^*$ such that, for all $w,v\in\Sigma^*$, $\phi(wv)=\phi(w)\phi(v)$. Notice that $\phi$ is completely defined by its value on letters.

Let $L$ be a regular language recognized by the DFA $A=(\Gamma,Q,i,F,\delta)$ and let $\phi$ be a morphism from $\Sigma^*$ to $\Gamma^*$. Then, $\phi^{-1}(L)$ is the regular language recognized by the DFA $B=(\Sigma,Q,i,F,\delta')$ where, for all $a\in\Sigma$ and $q\in Q$, $\delta'(q,a)=\delta(q,\phi(a))$.

A morphism $\phi$ is \emph{$1$-uniform} if the image by $\phi$ of any letter is a letter.

\section{Modifiers and 1-uniform operations}\label{sec-mod}
\subsection{Definition and first properties}
In \cite{CHL19} and \cite{Dav18}, the  authors of the respective papers investigated, independently, the same class of regular operations which are especially handy to manipulate for computing their state complexities. We recall its definition below.
  \begin{definition}\label{def-uni}
  A $k$-ary regular operation $\otimes$ is $1$-uniform if, for any $k$-tuple of regular languages $(L_1,\ldots,L_k)$, for any $1$-uniform morphism $\phi$, $\otimes(\phi^{-1}(L_1),\ldots,\phi^{-1}(L_k))=\phi^{-1}(\otimes(L_1,\ldots,L_k))$.
\end{definition}
 We check easily that the $1$-uniformity is stable by composition.
  \begin{claim}
    Let $\phi$ and $\psi$ be two $1$-uniform operations, respectively  $j$-ary and  $k$-ary. Then, for any integer $1\leq p\leq j$, the  $(j+k-1)$-ary  operator
        \[\phi\circ_p\psi(L_1,\ldots,L_{j+k-1})=\phi(L_1,\ldots,L_{p-1},\psi(L_{p},\ldots,L_{p+k-1}),L_{p+k},\ldots,L_{j+k-1})\]
    is $1$-uniform.
  \end{claim}
  Many well-known unary regular operations are $1$-uniform. See \cite{Dav18} for a non-exhaustive list of examples like the complement, the Kleene star, the reverse, the cyclic shift, the mirror, all boolean operations and catenation among others.

Each $1$-uniform operation corresponds to a construction over DFAs, which is handy when we need to compute the state complexity of its elements. Such a construction on DFAs has some constraints that are described in the following definitions.
\begin{definition}
  The \emph{state configuration} of a DFA $A=(\Sigma,Q,i,F,\delta)$ is the triplet $(Q,i,F)$.
\end{definition}
\begin{definition}\label{def-mod}
  A \emph{$k$-modifier}  is a $k$-ary operation acting on a $k$-tuple of DFAs $(A_1,\ldots,A_k)$, on the same alphabet $\Sigma$, and  producing a DFA  $\mathfrak m (A_1,...,A_k)$  such that
  \begin{itemize}
  \item its alphabet is $\Sigma$,
  \item its  state configuration  depends only on the state configurations of the DFAs $A_1,\ldots,A_k$,
  \item for any letter $a\in\Sigma$, the transition function of $a$ in $\mathfrak m (A_1,\ldots,A_k)$ depends only on the state configurations of the DFAs $A_1,\ldots, A_k$ and (only) on the transition function of $a$ in each of the DFAs $A_1,...,A_k$ .
  \end{itemize}
\end{definition}
\begin{example}\label{ex-sqrt}
  For any DFA $A=(\Sigma,Q,i,F,\delta)$, we define \[\mathfrak{Sqrt}(A)=(\Sigma,Q^{Q},\mathrm{Id}_Q,\{\phi\in Q^Q| \phi(\phi(i))\in F\},\delta'),\] where for any $a\in\Sigma$,
  $\delta'^a(\phi)=\delta^a\circ\phi$.
  The modifier $\mathfrak{Sqrt}$ describes the classical construction on DFA for the square root operation on languages \cite{Mas70}, \ie for all DFA $A$, $ \sqrt{\mathrm L(A)}=\mathrm L(\mathfrak{Sqrt}(A))$. In Figure \ref{sqrt}, $[ij]$ represents the function $\phi$ such that $\phi(0)=i$ and $\phi(1)=j$.
  \begin{figure}[H]
    \begin{minipage}[b]{0.49\textwidth}
      \centering
        \captionsetup{justification=centering}
        \begin{tikzpicture}[node distance=2cm]
          \node[state, initial above] (p1) {$0$};
          \node[state, accepting] (p2) at (2,0){$1$};
          \path[->]
          (p1) edge[bend left] node {$a,b$} (p2)
          (p2) edge[bend left] node {$a$} (p1)
          (p2) edge[loop right] node {$b$} (p2);
        \end{tikzpicture}
        \caption{A DFA $A$.}\label{A DFA}
    \end{minipage}
    \begin{minipage}[b]{0.49\textwidth}
      \centering
      \captionsetup{justification=centering}
        \begin{tikzpicture}[node distance=2cm]
          \node[state, initial] (p1) {$[01]$};
          \node[state] (p2) at (2,0) {$[10]$};
          \node[state] (p3) at (1,3) {$[00]$};
          \node[state,accepting] (p4) at (1,2) {$[11]$};
          \path[->]
          (p1) edge[bend left] node {$a$} (p2)
          (p2) edge[bend left] node {$a$} (p1)
          (p3) edge[bend left] node {$a,b$} (p4)
          (p4) edge[bend left] node {$a$} (p3)
          (p1) edge node {$b$} (p4)
          (p2) edge[swap] node {$b$} (p4)
          (p4) edge[loop right] node {$b$} (p4);
        \end{tikzpicture}
      \caption{The DFA $\mathfrak{Sqrt}(A)$.}\label{sqrt}
    \end{minipage}
  \end{figure}
\end{example}

\begin{example}\label{ex-xor}
  For any DFA $A=(\Sigma,Q_1,i_1,F_1,\delta_1)$ and $B=(\Sigma,Q_2,i_2,F_2,\delta_2)$, we define
  \[\mathfrak{Xor}(A,B)=(\Sigma,Q_1\times Q_2,(i_1,i_2),(F_1\times(Q_2\setminus F_2)\cup (Q_1\setminus F_1)\times F_2),(\delta_1,\delta_2))\]
  The modifier $\mathfrak{Xor}$ describes a construction  associated to the  symmetrical difference, \emph{i.e} for all DFAs $A$ and $B$, $\mathrm L(A)\oplus \mathrm L(B)=\mathrm L(\mathfrak{Xor}(A,B))$.
\end{example}

\begin{definition}
  A $k$-modifier $\mathfrak m$ is \emph{$1$-uniform} if, for every pair of $k$-tuples of DFAs $(A_1,\ldots,A_k)$ and $(B_1,\ldots,B_k)$ such that $\mathrm{L}(A_j)=\mathrm{L}(B_j)$ for all $j\in \{1,\dots,k\}$, we have $\mathrm{L}(\mathfrak m(A_1,\ldots,A_k))=\mathrm{L}(\mathfrak m(B_1,\ldots,B_k))$. In that case, there exists a regular operation $\otimes_{\mathfrak m}$ such that, for all $k$-tuples of DFAs $(A_1,\ldots,A_k)$, $\otimes_{\mathfrak m}(\mathrm L(A_1),\ldots,\mathrm L(A_k))=\mathrm{L}(\mathfrak m(A_1,\ldots,A_k))$. We say that $\mathfrak m$ \emph{describes} the operation $\otimes_{\mathfrak m	}$.
\end{definition}
We easily check that, for modifiers,  $1$-uniformity is stable by composition.
\begin{claim}
  Let $\mathfrak m_1$ and $\mathfrak m_2$ be respectively a $j$-modifier and a $k$-modifier describing, respectively, operations $\otimes_1$ and $\otimes_2$.  The modifier $\mathfrak m_1\circ_p\mathfrak m_2$
  describes $\otimes_1\circ_p\otimes_2$.
\end{claim}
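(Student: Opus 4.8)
The plan is to unwind both the definition of the modifier composition $\mathfrak m_1\circ_p\mathfrak m_2$ and the defining property of ``describing'' an operation, and then to check that they interact correctly. Recall that the composite is the $(j+k-1)$-modifier sending a tuple $(A_1,\ldots,A_{j+k-1})$ of DFAs over a common alphabet $\Sigma$ to
\[\mathfrak m_1\circ_p\mathfrak m_2(A_1,\ldots,A_{j+k-1})=\mathfrak m_1(A_1,\ldots,A_{p-1},\mathfrak m_2(A_p,\ldots,A_{p+k-1}),A_{p+k},\ldots,A_{j+k-1}).\]
Since $\mathfrak m_2$ outputs a DFA over the same alphabet $\Sigma$, the right-hand side is well defined, and a routine chaining of the three conditions of Definition \ref{def-mod} (the state configuration and each transition function of the output of $\mathfrak m_2$ depend only on the corresponding data of $A_p,\ldots,A_{p+k-1}$, which are then fed to $\mathfrak m_1$) shows that $\mathfrak m_1\circ_p\mathfrak m_2$ is again a $(j+k-1)$-modifier.

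First I would establish that $\mathfrak m_1\circ_p\mathfrak m_2$ is $1$-uniform, so that it indeed describes a regular operation. Take two $(j+k-1)$-tuples of DFAs $(A_1,\ldots,A_{j+k-1})$ and $(A'_1,\ldots,A'_{j+k-1})$ with $\mathrm L(A_\ell)=\mathrm L(A'_\ell)$ for every $\ell$. The $1$-uniformity of $\mathfrak m_2$ applied to the sub-tuples $(A_p,\ldots,A_{p+k-1})$ and $(A'_p,\ldots,A'_{p+k-1})$ gives $\mathrm L(\mathfrak m_2(A_p,\ldots,A_{p+k-1}))=\mathrm L(\mathfrak m_2(A'_p,\ldots,A'_{p+k-1}))$. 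Hence the two $j$-tuples obtained by inserting these intermediate DFAs into position $p$ agree language by language, and the $1$-uniformity of $\mathfrak m_1$ then yields equality of the two final languages, which is exactly the $1$-uniformity of the composite.

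Then I would identify the operation it describes. Fix a tuple $(A_1,\ldots,A_{j+k-1})$ and set $B=\mathfrak m_2(A_p,\ldots,A_{p+k-1})$. Because $\mathfrak m_2$ describes $\otimes_2$, we have $\mathrm L(B)=\otimes_2(\mathrm L(A_p),\ldots,\mathrm L(A_{p+k-1}))$. Because $\mathfrak m_1$ describes $\otimes_1$,
\[\mathrm L(\mathfrak m_1(A_1,\ldots,A_{p-1},B,A_{p+k},\ldots,A_{j+k-1}))=\otimes_1(\mathrm L(A_1),\ldots,\mathrm L(A_{p-1}),\mathrm L(B),\mathrm L(A_{p+k}),\ldots,\mathrm L(A_{j+k-1})).\]
Substituting the value of $\mathrm L(B)$ and comparing with the definition of $\otimes_1\circ_p\otimes_2$ shows that the left-hand side equals $\otimes_1\circ_p\otimes_2(\mathrm L(A_1),\ldots,\mathrm L(A_{j+k-1}))$, which gives the required identity $\otimes_{\mathfrak m_1\circ_p\mathfrak m_2}=\otimes_1\circ_p\otimes_2$.

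The argument is essentially a formal substitution, so there is no deep obstacle; the only points demanding care are bookkeeping ones. I must make sure the index shifts in $\circ_p$ at the level of modifiers match those at the level of operations (the $k$ arguments consumed by $\mathfrak m_2$ and $\otimes_2$ occupy positions $p,\ldots,p+k-1$, while the remaining arguments of $\mathfrak m_1$ and $\otimes_1$ are relabelled consistently), and that every DFA involved, including the intermediate $B$, lives over the same alphabet $\Sigma$ so that all compositions are legal. Once these are fixed, the displayed equalities chain together without further work.
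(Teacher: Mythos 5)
Your proof is correct and is exactly the routine verification the paper leaves implicit (the claim is stated with ``we easily check'' and no written proof): first propagate $1$-uniformity through $\mathfrak m_2$ and then $\mathfrak m_1$ to see that the composite is $1$-uniform, then substitute $\mathrm L(\mathfrak m_2(A_p,\ldots,A_{p+k-1}))=\otimes_2(\mathrm L(A_p),\ldots,\mathrm L(A_{p+k-1}))$ into the describing property of $\mathfrak m_1$. The index bookkeeping matches the paper's definition of $\circ_p$, so nothing further is needed.
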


The correspondence between $1$-uniform modifiers and $1$-uniform operations is stated in the following Theorem and proved in \cite{CHL19}.
\begin{theorem}
  A $k$-ary operation $\otimes$ is $1$-uniform if and only if there exists a $k$-modifier $\mathfrak m$ such that $\otimes=\otimes_{\mathfrak m}$.
\end{theorem}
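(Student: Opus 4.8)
The plan is to prove the two implications separately, relying throughout on the construction recalled in Section~\ref{sect-prel}: if $L$ is recognized by a DFA $A=(\Gamma,Q,i,F,\delta)$ and $\phi$ is a $1$-uniform morphism, then $\phi^{-1}(L)$ is recognized by the DFA $B=(\Sigma,Q,i,F,\delta')$ having the \emph{same} state configuration $(Q,i,F)$ and with $\delta'^a=\delta^{\phi(a)}$ for each letter $a$. In other words, the inverse-morphism construction leaves state configurations untouched and rewrites transition functions letter by letter, which is exactly the kind of data a modifier is allowed to see. This observation is what makes the two notions interchangeable.

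For the implication $(\Leftarrow)$, I would assume $\otimes=\otimes_{\mathfrak m}$ for a $1$-uniform $k$-modifier $\mathfrak m$ and show that $\otimes$ is $1$-uniform as an operation. Given regular languages $L_1,\ldots,L_k$ recognized by DFAs $A_1,\ldots,A_k$ over $\Gamma$ and a $1$-uniform morphism $\phi$, I would realize each $\phi^{-1}(L_j)$ by the DFA $B_j$ obtained from $A_j$ as above, so that $A_j$ and $B_j$ share their state configuration and the transition function of $a$ in $B_j$ equals the transition function of $\phi(a)$ in $A_j$. By the two defining constraints of Definition~\ref{def-mod}, $\mathfrak m(B_1,\ldots,B_k)$ then has the same state configuration as $\mathfrak m(A_1,\ldots,A_k)$, and for each letter $a$ its transition function is computed from exactly the same data (the common state configurations and those transition functions) as the transition function of $\phi(a)$ in $\mathfrak m(A_1,\ldots,A_k)$; hence the two coincide. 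Thus $\mathfrak m(B_1,\ldots,B_k)$ is precisely the inverse-morphism construction applied to $\mathfrak m(A_1,\ldots,A_k)$, whence $\mathrm L(\mathfrak m(B_1,\ldots,B_k))=\phi^{-1}(\mathrm L(\mathfrak m(A_1,\ldots,A_k)))$, which is the $1$-uniformity identity for $\otimes$.

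For the converse $(\Rightarrow)$, the idea is a universal ``monster'' construction. Fixing state configurations $(Q_j,i_j,F_j)$, I would take the alphabet $\Gamma=Q_1^{Q_1}\times\cdots\times Q_k^{Q_k}$ together with monster DFAs $M_j$ on $Q_j$ in which the letter $(f_1,\ldots,f_k)$ acts on $M_j$ through $f_j$; thus every transition function of $Q_j$ is realized by some letter. Since $\otimes$ is regular, $L=\otimes(\mathrm L(M_1),\ldots,\mathrm L(M_k))$ is recognized by a canonical DFA $D=(\Gamma,R,r_0,G,\rho)$ (its minimal one), determined solely by the chosen state configurations. I would then set $\mathfrak m(A_1,\ldots,A_k)=(\Sigma,R,r_0,G,\rho')$ with $\rho'^a=\rho^{(\delta^a_1,\ldots,\delta^a_k)}$, which manifestly meets the three requirements of Definition~\ref{def-mod}. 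To check it describes $\otimes$, I would introduce the $1$-uniform morphism $\phi:\Sigma^*\to\Gamma^*$, $\phi(a)=(\delta^a_1,\ldots,\delta^a_k)$, verify that $\phi^{-1}(\mathrm L(M_j))=\mathrm L(A_j)$, and conclude via the $1$-uniformity of $\otimes$ that $\otimes(\mathrm L(A_1),\ldots,\mathrm L(A_k))=\phi^{-1}(L)=\mathrm L(\mathfrak m(A_1,\ldots,A_k))$.

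The main obstacle is this converse direction, and within it the delicate point is the well-definedness of $\mathfrak m$: its output must depend only on the input state configurations and on the per-letter transition functions, never on the particular DFAs or on the input alphabet. Choosing $D$ canonically as the minimal DFA of $\otimes(\mathrm L(M_1),\ldots,\mathrm L(M_k))$ settles this, since the monsters are themselves fixed by the state configurations. One should also confirm that the monster alphabet is rich enough that every DFA with the given state configurations occurs as an inverse-morphic image of the monsters, which is exactly what the choice $\Gamma=Q_1^{Q_1}\times\cdots\times Q_k^{Q_k}$ guarantees.
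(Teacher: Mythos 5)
Your argument is correct: the $(\Leftarrow)$ direction correctly exploits that the inverse image under a $1$-uniform morphism preserves state configurations and permutes transition functions letterwise (exactly the data a modifier may depend on), and the $(\Rightarrow)$ direction's monster construction with a canonical (minimal) DFA for $\otimes(\mathrm L(M_1),\ldots,\mathrm L(M_k))$ yields a well-defined modifier describing $\otimes$. The paper itself does not prove this theorem but defers to the cited reference \cite{CHL19}, whose proof follows essentially the same monster-and-inverse-morphism strategy you use (the paper later introduces these monsters in Definition~\ref{def-mon} for its state-complexity bounds), so your proposal matches the intended argument.
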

\subsection{Functional notations}


When there is no ambiguity, for any symbol $\mathtt{X}$ and any integer $k$ given by the context, we  write $\underline{\mathtt{X}}$ for  $(\mathtt{X}_1,\cdots, \mathtt{X}_k)$. The number $k$ will often be the arity of the regular operation or of the modifier we are considering.

From Definition \ref{def-mod}, any $k$-modifier $\mathfrak m$ can be seen as a  $4$-tuple of mappings $(\mathfrak Q,\mathfrak i,\mathfrak f,\mathfrak d)$ acting on $k$ DFAs $\underline{A}$ with $A_j=(\Sigma, Q_j,i_j,F_j,\delta_ j)$ to build a DFA $\mathfrak m\underline{A}=(\Sigma,Q,i,F,\delta)$, where 
\[Q=\mathfrak Q(\underline{Q},\underline{i},\underline{F})\; i=\mathfrak i(\underline{Q},\underline{i},\underline{F}),\; F=\mathfrak f(\underline{Q},\underline{i},\underline{F}) \text{ and }\forall a\in \Sigma,\ \delta^a=\mathfrak d(\underline{i},\underline{F},\underline{\delta^a}).\]
For the sake of clarity,  we do not write  explicitly the domains of the $4$-tuple of mappings but the reader can derive them easily from the above equalities. Notice that we do not need to point out explicitly the dependency of $\mathfrak d$ on $\underline{Q}$ because the information is already contained in $\underline{\delta^a}$. Notice also that $\delta^a$ indeed only depends on the $k$-tuple of transition functions $\underline\delta^a$ and not on any other transition functions. We identify modifiers and such $4$-tuples of mappings.
\\
Below we revisit the definition of $\mathfrak{Sqrt}$ and $\mathfrak{Xor}$ according to this formalism.
\begin{example}\label{ex-sqrt2}
  $\mathfrak{Sqrt}=(\mathfrak{Q},\mathfrak i,\mathfrak f,\mathfrak d)$ where
    \[\mathfrak Q(Q,i,F)=Q^Q,\; \mathfrak i(Q,i,F)=\mathrm{Id}_Q, \; \mathfrak f(Q,i,F)=\{\phi \mid \phi(\phi(i))\in F\},\; \mathfrak d(i,F,\phi)(\psi)=\phi\circ\psi\]
\end{example}
\begin{example}\label{ex-xor2}
  $\mathfrak{Xor}=(\mathfrak{Q},\mathfrak i,\mathfrak f,\mathfrak d)$ where
\[
  \begin{array}{l}
    \mathfrak Q((Q_1,Q_2),(i_1,i_2),(F_1,F_2))=Q_1\times Q_2,\;\mathfrak i((Q_1,Q_2),(i_1,i_2),(F_1,F_2))=(i_1,i_2),\\
    \mathfrak f((Q_1,Q_2),(i_1,i_2),(F_1,F_2))=F_{1}\times(Q_{2}\setminus F_{2})\cup (Q_{1}\setminus F_{1})\times F_{2},\\
    \mathfrak d((i_1,i_2),(F_1,F_2),(\delta_1,\delta_2))=(\delta_1,\delta_2).
  \end{array}
\]
\end{example}
\section{Friendly modifiers and friendly operations}\label{sect-friend}
\subsection{Friendly modifiers}
We first define friendly modifiers, and eventually give a characterisation of the operations described by friendly modifiers.
\begin{definition}\label{def-friendly}
  We say that a $k$-modifier $\mathfrak{m}=(\mathfrak Q,\mathfrak i,\mathfrak f,\mathfrak d)$ is \emph{friendly} if, for any $k$-tuple of finite sets $\underline Q$, any $\underline F$ such that $F_j\subseteq Q_j$ for all $j$, any $\underline i\in Q_1\times\cdots\times Q_k$, and any $\underline \phi,\underline \psi \in Q_1^{Q_1}\times\cdots\times Q_k^{Q_k}$,\[\mathfrak d(\underline i,\underline F,(\phi_1\circ\psi_1,\ldots,\phi_k\circ\psi_k))=\mathfrak d(\underline i,\underline F,\underline \phi)\circ\mathfrak d(\underline i,\underline F,\underline \psi)\]
\end{definition}
The idea of the definition is that $\mathfrak d$ should be a morphism for its third coordinate. For instance, the modifiers $\mathfrak{Sqrt}$ and $\mathfrak{Xor}$ are friendly. It is easy to check the following property of stability.
\begin{proposition}\label{prop-frcomp}
  Friendly modifiers are stable by composition.
\end{proposition}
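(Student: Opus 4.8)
The plan is to verify the friendly condition of Definition \ref{def-friendly} directly for the composite modifier $\mathfrak m_1\circ_p\mathfrak m_2$, by unwinding its transition map and then invoking the friendliness of $\mathfrak m_1$ and of $\mathfrak m_2$ in turn. Write $\mathfrak m_1=(\mathfrak Q_1,\mathfrak i_1,\mathfrak f_1,\mathfrak d_1)$ for a friendly $j$-modifier, $\mathfrak m_2=(\mathfrak Q_2,\mathfrak i_2,\mathfrak f_2,\mathfrak d_2)$ for a friendly $k$-modifier, and let $\mathfrak m=\mathfrak m_1\circ_p\mathfrak m_2=(\mathfrak Q,\mathfrak i,\mathfrak f,\mathfrak d)$ denote their $p$-th composition, a $(j+k-1)$-modifier. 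First I would make the transition map $\mathfrak d$ of $\mathfrak m$ explicit. By the definition of composition, applying $\mathfrak m$ amounts to building the intermediate automaton $B=\mathfrak m_2(A_p,\ldots,A_{p+k-1})$ and feeding it as the $p$-th argument of $\mathfrak m_1$. Reading off the transition function for a letter $a$, with $\underline\phi=\underline{\delta^a}$ and writing $\underline i^{(2)}=(i_p,\ldots,i_{p+k-1})$, $\underline F^{(2)}=(F_p,\ldots,F_{p+k-1})$, one obtains
\[
\mathfrak d(\underline i,\underline F,\underline\phi)=\mathfrak d_1\big(\underline i^{(1)},\underline F^{(1)},(\phi_1,\ldots,\phi_{p-1},\mathfrak d_2(\underline i^{(2)},\underline F^{(2)},(\phi_p,\ldots,\phi_{p+k-1})),\phi_{p+k},\ldots,\phi_{j+k-1})\big),
\]
where $\underline i^{(1)},\underline F^{(1)}$ are the tuples feeding $\mathfrak m_1$, whose $p$-th entries are the initial state $i_B$ and final set $F_B$ of $B$.

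The key observation, which I would state carefully before computing, is that the state configuration $(Q_B,i_B,F_B)$ of $B$ depends only on the state configurations of $A_p,\ldots,A_{p+k-1}$, hence only on the fixed data $\underline Q,\underline i,\underline F$, and never on the transition functions $\underline\phi,\underline\psi$. Consequently, throughout the verification the tuples $\underline i^{(1)},\underline F^{(1)},\underline i^{(2)},\underline F^{(2)}$ are held constant, and the elements $\Phi_B:=\mathfrak d_2(\underline i^{(2)},\underline F^{(2)},(\phi_p,\ldots,\phi_{p+k-1}))$ and $\Psi_B:=\mathfrak d_2(\underline i^{(2)},\underline F^{(2)},(\psi_p,\ldots,\psi_{p+k-1}))$ both lie in $Q_B^{Q_B}$, so they are legitimate inputs for the friendly condition applied to $\mathfrak m_1$.

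The computation then proceeds in two strokes. Starting from $\mathfrak d(\underline i,\underline F,\underline\phi\circ\underline\psi)$, the $p$-th entry of the inner tuple is $\mathfrak d_2(\underline i^{(2)},\underline F^{(2)},(\phi_p\circ\psi_p,\ldots,\phi_{p+k-1}\circ\psi_{p+k-1}))$, which by friendliness of $\mathfrak m_2$ equals $\Phi_B\circ\Psi_B$. Hence the inner tuple fed to $\mathfrak d_1$ is exactly the point-by-point composition of $(\phi_1,\ldots,\phi_{p-1},\Phi_B,\phi_{p+k},\ldots,\phi_{j+k-1})$ with $(\psi_1,\ldots,\psi_{p-1},\Psi_B,\psi_{p+k},\ldots,\psi_{j+k-1})$. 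Applying friendliness of $\mathfrak m_1$ to this point-by-point composition splits $\mathfrak d(\underline i,\underline F,\underline\phi\circ\underline\psi)$ as a composition of two terms, and by the explicit formula above these are precisely $\mathfrak d(\underline i,\underline F,\underline\phi)$ and $\mathfrak d(\underline i,\underline F,\underline\psi)$, which is the desired identity.

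The routine-but-essential parts are pure bookkeeping: aligning the index blocks $1,\ldots,p-1$, the $B$-slot, and $p+k,\ldots,j+k-1$, and checking that the two $j$-tuples agree entry by entry under point-by-point composition, the only nontrivial entry being the $B$-slot handled by $\mathfrak m_2$. The main obstacle, and really the only place where a hypothesis is used beyond the two morphism identities, is justifying that $i_B$ and $F_B$ are genuinely independent of the transition functions, so that the friendly condition of $\mathfrak m_1$ may be invoked with its first two arguments fixed; this is exactly what the modifier axioms of Definition \ref{def-mod} guarantee. Once this independence is secured, the proof is the two-step application of the morphism property, first for $\mathfrak d_2$ and then for $\mathfrak d_1$.
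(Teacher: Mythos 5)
Your proof is correct: the paper omits this proof entirely (stating only that the property is ``easy to check''), and your direct verification --- unwinding the transition map of $\mathfrak m_1\circ_p\mathfrak m_2$, noting that the state configuration of the intermediate automaton $B$ is independent of the transition functions, and then applying the morphism identity first for $\mathfrak d_2$ and then for $\mathfrak d_1$ --- is precisely the argument the authors leave implicit. Nothing is missing.
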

\subsection{Standard friendly modifiers}

We define the class \emph{standard} friendly modifiers. The main idea is that to any friendly $1$-uniform modifier is associated  a standard friendly modifier that is another $1$-uniform modifier describing the same operation. Additionally, Theorem \ref{th-frmod}, proven later in Section \ref{4.4}, shows that any  operation described by a friendly modifier is also  described by a unique standard friendly modifier. In other words, standard modifiers is canonical form for every $1$-uniform friendly modifier describing the same operation.
\begin{definition}\label{def-stan}
  We say that a $k$-modifier $\mathfrak m=(\mathfrak Q,\mathfrak i,\mathfrak f ,\mathfrak d)$ is \emph{standard} if
  \begin{itemize}
  \item $\mathfrak Q(\underline Q,\underline i,\underline F)=Q_1^{Q_1}\times\cdots\times Q_k^{Q_k}$
  \item $\mathfrak i(\underline Q,\underline i,\underline F)=(\mathrm{Id}_{Q_1},\ldots,\mathrm{Id}_{Q_k})$
  \item $\mathfrak d(\underline i,\underline F,\underline \phi)(\underline \psi)= (\phi_1\circ\psi_1,\ldots,\phi_k\circ \psi_k)$
  \end{itemize}
\end{definition}
We easily check that a standard $k$-modifier is friendly. Notice that a friendly standard $k$-modifier $\mathfrak m=(\mathfrak Q,\mathfrak i,\mathfrak f,\mathfrak d)$ is entirely defined by its third coordinate $\mathfrak f$.
\begin{definition}
  Let $\mathfrak m=(\mathfrak Q,\mathfrak i,\mathfrak f,\mathfrak d)$ be a friendly $k$-modifier. We denote by $\cf{\mathfrak m}$ the friendly standard $k$-modifier such that $\cf{\mathfrak f}(\underline Q,\underline i,\underline F)=\{\underline \phi \mid \mathfrak d(\underline i,\underline F,\underline \phi)(\mathfrak i(\underline Q,\underline i,\underline F)) \in \mathfrak f(\underline Q,\underline i,\underline F)\}$.
\end{definition}
\begin{example}
  Let $\mathfrak{C}$ be the $1$-modifier $(\mathfrak Q_{\mathfrak{C}},\mathfrak i_{\mathfrak{C}},\mathfrak f_{\mathfrak{C}},\mathfrak d_{\mathfrak{C}})$, with
  \[\mathfrak Q_{\mathfrak{C}}(Q)=Q,\; \mathfrak i_{\mathfrak{C}}(Q,i,F)=i,\; \mathfrak f_{\mathfrak{C}}(Q,i,F)=Q\setminus F,\; \mathfrak d_{\mathfrak{C}}(i,F,d)=d.\]
  The modifier $\mathfrak C$ is friendly. Furthermore, as it follows the classical construction of the complement of DFAs, we easily see that it is $1$-uniform and describes the regular operation complement. Figures \ref{first automata},\ref{second automata} and \ref{third automata} describe the effect of $\mathfrak{C}$ and of $\cf{\mathfrak C}$ on a DFA $A$. In Figure \ref{third automata}, $[ij]$ represents the function $\phi$ such that $\phi(0)=i$ and $\phi(1)=j$.
  \begin{figure}[H]
    \begin{minipage}[b]{0.32\linewidth}
      \centering
      \captionsetup{justification=centering}
        \begin{tikzpicture}[node distance=2cm]
          \node[state, initial above,accepting] (p1) {$0$};
          \node[state] (p2) at (2,0){$1$};
          \path[->]
          (p1) edge[bend left] node {$a,b$} (p2)
          (p2) edge[bend left] node {$a$} (p1)
          (p2) edge[loop right] node {$b$} (p2);
        \end{tikzpicture}
      \caption{The DFA $A$.}\label{first automata}
    \end{minipage}
    \begin{minipage}[b]{0.32\linewidth}
      \centering
      \captionsetup{justification=centering}
        \begin{tikzpicture}[node distance=2cm]
          \node[state, initial above] (p1) {$0$};
          \node[state, accepting] (p2) at (2,0){$1$};
          \path[->]
          (p1) edge[bend left] node {$a,b$} (p2)
          (p2) edge[bend left] node {$a$} (p1)
          (p2) edge[loop right] node {$b$} (p2);
        \end{tikzpicture}
      \caption{The DFA $\mathfrak {C}A$.}\label{second automata}
    \end{minipage}
    \begin{minipage}[b]{0.32\linewidth}
      \centering
      \captionsetup{justification=centering}
        \begin{tikzpicture}[node distance=2cm]
          \node[state, initial] (p1) {$[01]$};
          \node[state,accepting] (p2) at (2,0) {$[10]$};
          \node[state] (p3) at (1,3) {$[00]$};
          \node[state,accepting] (p4) at (1,2) {$[11]$};
          \path[->]
          (p1) edge[bend left] node {$a$} (p2)
          (p2) edge[bend left] node {$a$} (p1)
          (p3) edge[bend left] node {$a,b$} (p4)
          (p4) edge[bend left] node {$a$} (p3)
          (p1) edge node {$b$} (p4)
          (p2) edge[swap] node {$b$} (p4)
          (p4) edge[loop right] node {$b$} (p4);
        \end{tikzpicture}
      \caption{The DFA $\cf{\mathfrak C}A$.}\label{third automata}
    \end{minipage}
  \end{figure}
\end{example}

\begin{lemma}\label{lemma-can}
  For any $1$-uniform friendly $k$-modifier $\mathfrak m$,  the standard modifier $\cf{\mathfrak m}$ describe the same operation as $\mathfrak m$.
\end{lemma}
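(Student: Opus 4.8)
The plan is to prove the stronger statement that $\mathrm L(\mathfrak m\underline A)=\mathrm L(\cf{\mathfrak m}\,\underline A)$ for \emph{every} $k$-tuple of DFAs $\underline A=(A_1,\dots,A_k)$ over a common alphabet, where $A_j=(\Sigma,Q_j,i_j,F_j,\delta_j)$. Once this is established, the statement about operations is immediate: since $\mathfrak m$ is $1$-uniform, $\mathrm L(A_j)=\mathrm L(B_j)$ for all $j$ forces $\mathrm L(\mathfrak m\underline A)=\mathrm L(\mathfrak m\underline B)$, hence $\mathrm L(\cf{\mathfrak m}\,\underline A)=\mathrm L(\cf{\mathfrak m}\,\underline B)$, so $\cf{\mathfrak m}$ is $1$-uniform as well and describes the same operation $\otimes_{\mathfrak m}$.

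Write $q_0=\mathfrak i(\underline Q,\underline i,\underline F)$ and $F=\mathfrak f(\underline Q,\underline i,\underline F)$, and for a word $w$ put $\underline{\delta^w}=(\delta_1^{w},\dots,\delta_k^{w})$. First I would track the run of a word $w$ in $\cf{\mathfrak m}\,\underline A$. Its initial state is $(\mathrm{Id}_{Q_1},\dots,\mathrm{Id}_{Q_k})$ and its transition on a letter $a$ sends $\underline\psi$ to $(\delta_1^{a}\circ\psi_1,\dots,\delta_k^{a}\circ\psi_k)$; since $\delta_j^{a}\circ\delta_j^{v}=\delta_j^{va}$ for every letter $a$ and word $v$, a straightforward induction on $|w|$ shows that the state reached after reading $w$ is exactly $\underline{\delta^w}$. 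By definition of $\cf{\mathfrak f}$, the word $w$ is therefore accepted by $\cf{\mathfrak m}\,\underline A$ if and only if $\mathfrak d(\underline i,\underline F,\underline{\delta^w})(q_0)\in F$.

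Next I would identify this very condition inside $\mathfrak m\underline A$, and this is where friendliness does the work. For $w=a_1\cdots a_n$ with $n\geq 1$, the word-transition function of $\mathfrak m\underline A$ is $\delta^{w}=\delta^{a_n}\circ\cdots\circ\delta^{a_1}=\mathfrak d(\underline i,\underline F,\underline{\delta^{a_n}})\circ\cdots\circ\mathfrak d(\underline i,\underline F,\underline{\delta^{a_1}})$. Applying Definition~\ref{def-friendly} $n-1$ times collapses the right-hand side to $\mathfrak d(\underline i,\underline F,\underline{\delta^{a_n}}\circ\cdots\circ\underline{\delta^{a_1}})=\mathfrak d(\underline i,\underline F,\underline{\delta^{w}})$, because the point-by-point composition of the tuples $\underline{\delta^{a_\ell}}$ is precisely $\underline{\delta^{w}}$. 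Hence the state of $\mathfrak m\underline A$ reached after $w$ is $\delta^{w}(q_0)=\mathfrak d(\underline i,\underline F,\underline{\delta^{w}})(q_0)$, so $w$ is accepted if and only if $\mathfrak d(\underline i,\underline F,\underline{\delta^{w}})(q_0)\in F$ — exactly the condition obtained for $\cf{\mathfrak m}\,\underline A$. Thus the two automata agree on every nonempty word, and it is convenient to repackage the two computations through the single map $h(\underline\phi)=\mathfrak d(\underline i,\underline F,\underline\phi)(q_0)$, which by the same morphism property commutes with transitions and satisfies $h^{-1}(F)=\cf{\mathfrak f}(\underline Q,\underline i,\underline F)$.

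The step I expect to be the crux is the collapse above together with the one edge case it leaves open: it is valid only for $n\geq 1$. For $w=\varepsilon$ the run in $\mathfrak m\underline A$ ends at $q_0$ itself, so $\varepsilon$ is accepted iff $q_0\in F$, whereas the condition from the second paragraph reads $\mathfrak d(\underline i,\underline F,(\mathrm{Id}_{Q_1},\dots,\mathrm{Id}_{Q_k}))(q_0)\in F$. These coincide because the friendliness of $\mathfrak m$ makes $\mathfrak d(\underline i,\underline F,-)$ send the identity tuple to the identity map, i.e. $\mathfrak d(\underline i,\underline F,(\mathrm{Id}_{Q_1},\dots,\mathrm{Id}_{Q_k}))=\mathrm{Id}$ (equivalently, $h$ maps the initial state of $\cf{\mathfrak m}\,\underline A$ to $q_0$); checking this identity-preservation at the empty word is the delicate point, after which $\mathrm L(\mathfrak m\underline A)=\mathrm L(\cf{\mathfrak m}\,\underline A)$ follows and the lemma is proved.
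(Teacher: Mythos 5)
Your treatment of nonempty words is exactly the paper's argument: friendliness collapses $\mathfrak d(\underline i,\underline F,\underline{\delta^{a_n}})\circ\cdots\circ\mathfrak d(\underline i,\underline F,\underline{\delta^{a_1}})$ into $\mathfrak d(\underline i,\underline F,\underline{\delta^{w}})$, and the definition of $\cf{\mathfrak f}$ converts the resulting acceptance condition in $\mathfrak m\underline A$ into the one in $\cf{\mathfrak m}\,\underline A$. That part is correct and coincides with the paper's proof.

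The problem is the empty-word step: you put your finger on the right spot but then justify it with a claim that friendliness does not deliver. Definition~\ref{def-friendly} only makes $\underline\phi\mapsto\mathfrak d(\underline i,\underline F,\underline\phi)$ a \emph{semigroup} morphism; instantiating it with $\underline\phi=\underline\psi=(\mathrm{Id}_{Q_1},\dots,\mathrm{Id}_{Q_k})$ gives merely that $e=\mathfrak d(\underline i,\underline F,(\mathrm{Id}_{Q_1},\dots,\mathrm{Id}_{Q_k}))$ is idempotent, not that it is the identity, and an idempotent need not fix $q_0$ even up to finality. This is not just a missing justification. Take $k=1$, $\mathfrak Q(Q,i,F)=Q^Q\times\{0,1\}$, $\mathfrak i(Q,i,F)=(\mathrm{Id}_Q,0)$, $\mathfrak d(i,F,\phi)(\psi,x)=(\phi\circ\psi,1)$, and let $\mathfrak f(Q,i,F)$ consist of all $(\phi,1)$ with $\phi(i)\in F$, together with $(\mathrm{Id}_Q,0)$ exactly when $i\notin F$. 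This modifier is friendly and $1$-uniform (it describes $L\mapsto L\oplus\{\varepsilon\}$), while its standardization $\cf{\mathfrak m}$ describes $L\mapsto L$; the two disagree on $\varepsilon$ for every $L$. So your identity-preservation claim is false, and the statement genuinely needs either a restriction to nonempty words or the extra hypothesis that $\mathfrak d(\underline i,\underline F,\cdot)$ preserves the identity tuple up to finality of $q_0$. For what it is worth, the paper's own proof silently asserts the same equality at $l=0$ (its first display reduces to $\mathfrak d(\underline i,\underline F,\underline{\delta^{\varepsilon}})(\mathfrak i(\underline Q,\underline i,\underline F))=\mathfrak i(\underline Q,\underline i,\underline F)$ there), so you have located a real hole in the argument; you just have not filled it.
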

\begin{proof}
  Let $\mathfrak m=(\mathfrak Q,\mathfrak i,\mathfrak f,\mathfrak d)$ be a $1$-uniform friendly $k$-modifier. We show that $\mathfrak m$ and $\cf{\mathfrak m}$ describe the same operation, which proves that $\cf{\mathfrak m}$ is $1$-uniform. Let $\underline A$ be a $k$-tuple of DFA such that $A_j=(\Sigma,Q_j,i_j,F_j,\delta_j)$.
  
  A word $a_1a_2\ldots a_l$ is in $\mathrm L(\mathfrak m \underline A)$ if and only if \[ \mathfrak d(\underline i,\underline F,\underline\delta^{a_1a_2\ldots a_l})(\mathfrak i(\underline Q,\underline i,\underline F))=(\mathfrak d(\underline i,\underline F,\underline \delta^{a_l})\circ \mathfrak d(\underline i,\underline F,\underline \delta^{a_{l-1}}) \circ \ldots \circ \mathfrak d(\underline i,\underline F,\underline \delta^{a_{1}}))(\mathfrak i(\underline Q,\underline i,\underline F)) \in \mathfrak f(\underline Q,\underline i,\underline F).\] Equivalently, \[(\cf{\mathfrak d}(\underline i,\underline F,\underline \delta^{a_l})\circ \cf{\mathfrak d}(\underline i,\underline F,\underline \delta^{a_{l-1}}) \circ \ldots \circ \cf{\mathfrak d}(\underline i,\underline F,\underline \delta^{a_{1}}))((\mathrm{Id}_{Q_1},\ldots,\mathrm{Id}_{Q_k}))=\underline \delta^{a_1a_2\ldots a_l} \in \cf{\mathfrak f}(\underline Q,\underline i,\underline F).\] But this last statement is equivalent to $a_1a_2\ldots a_l\in\mathrm L(\cf{\mathfrak m}\underline A)$. So $\mathrm L(\cf{\mathfrak m}\underline A)=\mathrm L(\mathfrak m\underline A)$.\cqfd
\end{proof}

We denote by $\mathcal M_k$ the set of $1$-uniform friendly standard $k$-modifiers.

\subsection{Characteristic functions}
As a standard friendly modifier $(\mathfrak Q,\mathfrak i,\mathfrak f,\mathfrak d)$ is entirely characterized  by  the map $\mathfrak f$, which governs the final states of the output DFA, we first show a regularity property on these final states when the modifier is $1$-uniform. To that aim, we associate to every state of an output DFA a characteristic function, in such a way that any two states associated to the same characteristic function have the same finality. These characteristic functions are represented by $k$-tuples of eventually periodic sequences with values in $\{0,1\}$.

Let $\mathcal{U}_k$ be the set of all $k$-tuples $\underline u$ where each $u_j$ is an eventually periodic sequence with values in $\{0,1\}$. To simplify notation, for all $(j,p)\in\{1,\ldots,k\}\times\mathbb N$, we identify $(u_j)_p$ with $u_{j,p}$.
\begin{definition}\label{def-char}
  Let $\underline \phi\in Q_1^{Q_1}\times\cdots\times Q_k^{Q_k}$. We denote by $\chi_{\underline i,\underline F}^{\underline \phi}$ the $k$-tuple of sequences $\underline u\in\mathcal U_k$ where, for any $p\in \mathbb N$ and any $j\in \{1,\ldots,k\}$,
  $u_{j,p}= 1$ if $\phi_j^{p}(i_j)\in F_j$ and  $u_{j,p}= 0$ otherwise ;
    with the notation $\phi_j^p=\underbrace{\phi_j\circ\ldots\circ\phi_j}_{\text p \;times}$. We say that $\chi_{\underline i,\underline F}^{\underline \phi}$ is the $\emph{charateristic sequence}$ of $\underline\phi$ in the state configuration $(\underline Q,\underline i,\underline F)$.
\end{definition}
Notice that, in the above definition, we have $u\in \mathcal U_k$ because  $\phi_j^{p}(i_j)$ is eventually periodic, since $\phi_j$ is a function from a finite set into a finite set.
\begin{example}
  As represented in Figures \ref{first sc}, \ref{second sc}, let $(Q_1,Q_2)=(\{0,1\},\{0,1\})$, $(i_1,i_2)=(0,0)$, $(F_1,F_2)=(\{1\},\{0\})$, $\phi_1(0)=1, \phi_1(1)=0$, $\phi_1=\phi_2$ and $u=\chi_{(i_1,i_2),(F_1,F_2)}^{\phi_1,\phi_2}$. We have, for all $(j,p)\in\{1,2\}\times \mathbb N$, $u_{j,p}=1$ if and only if $p+j$ is even.
    \begin{figure}[H]
    \begin{minipage}[b]{0.49\linewidth}
      \begin{center}
        \begin{tikzpicture}[node distance=2cm]
          \node[state, initial above] (p1) {$0$};
          \node[state,accepting] (p2) at (2,0){$1$};
          \path[->]
          (p1) edge[bend left] node {} (p2)
          (p2) edge[bend left] node {} (p1);
        \end{tikzpicture}
      \end{center}
      \caption{A representation of $(Q_1,i_1,F_1)$ with function $\phi_1$}\label{first sc}
    \end{minipage}
    \begin{minipage}[b]{0.49\linewidth}
      \begin{center}
        \begin{tikzpicture}[node distance=2cm]
          \node[state, initial above, accepting] (p1) {$0$};
          \node[state] (p2) at (2,0){$1$};
          \path[->]
          (p1) edge[bend left] node {} (p2)
          (p2) edge[bend left] node {} (p1);
        \end{tikzpicture}
      \end{center}
      \caption{A representation of $(Q_2,i_2,F_2)$ with function $\phi_2$}\label{second sc}
    \end{minipage}
  \end{figure}
\end{example}
Recall that, if $\underline A$ is any $k$-tuple of DFA such that the set of states of $A_j$ is $Q_j$, the set of the states of $\mathfrak m\underline A$ is $Q_1^{Q_1}\times\cdots\times Q_k^{Q_k}$ when $\mathfrak m$ is standard. The next proposition expresses the fact that states with the same characteristic sequence have the same finality.
\begin{proposition}\label{lemma-root}
  Let $\mathfrak m=(\mathfrak Q,\mathfrak i,\mathfrak f,\mathfrak d)$ be a $1$-uniform standard friendly $k$-modifier. Let $(\underline Q,\underline i,\underline F)$ and $(\underline Q',\underline i',\underline F')$ be any two state configurations, $\underline \phi\in Q_1^{Q_1}\times\cdots\times Q_k^{Q_k}$ and $\underline \phi' \in {Q'_1}^{Q'_1}\times\cdots\times {Q'_k}^{Q'_k}$. If $\chi_{\underline i,\underline F}^{\underline \phi}=\chi_{\underline i,\underline F}^{\underline \phi'}$, then $\underline \phi\in \mathfrak f(\underline Q,\underline i,\underline F)$ if and only if $\underline \phi'\in \mathfrak f(\underline Q',\underline i',\underline F')$.
\end{proposition}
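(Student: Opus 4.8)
The plan is to reduce the statement to the $1$-uniformity of $\mathfrak m$ by observing that a characteristic sequence is nothing but the membership data of a \emph{unary} regular language. The crucial point is that the sequence $\chi_{\underline i,\underline F}^{\underline\phi}$ records, coordinate by coordinate, whether $\phi_j^p(i_j)\in F_j$, and this is exactly what a single-letter DFA acting by $\phi_j$ recognises. First I would fix the one-letter alphabet $\Sigma=\{a\}$ and build the $k$-tuple $\underline A$ with $A_j=(\Sigma,Q_j,i_j,F_j,\delta_j)$ where the letter $a$ acts as $\phi_j$, that is $\delta_j^a=\phi_j$; likewise I build $\underline A'$ over the same alphabet $\{a\}$ with $a$ acting as $\phi_j'$ in $A_j'$. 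Over this unary alphabet one has $a^p\in\mathrm L(A_j)$ if and only if $\delta_j^{a^p}(i_j)=\phi_j^p(i_j)\in F_j$, i.e. if and only if $(\chi_{\underline i,\underline F}^{\underline\phi})_{j,p}=1$, and the analogous statement holds for $A_j'$. Hence the hypothesis $\chi_{\underline i,\underline F}^{\underline\phi}=\chi_{\underline i',\underline F'}^{\underline\phi'}$ forces $\mathrm L(A_j)=\mathrm L(A_j')$ for every $j$.

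Next I would read the letter $a$ once in the standard modifier. Since $\mathfrak m$ is standard, the initial state of $\mathfrak m\underline A$ is $(\mathrm{Id}_{Q_1},\ldots,\mathrm{Id}_{Q_k})$ and the transition function of $a$ sends a state $\underline\psi$ to $(\delta_1^a\circ\psi_1,\ldots,\delta_k^a\circ\psi_k)$. Applying this to the initial state, and recalling that $\delta_j^a=\phi_j$, yields exactly $(\phi_1,\ldots,\phi_k)=\underline\phi$. Consequently $a\in\mathrm L(\mathfrak m\underline A)$ if and only if the state $\underline\phi$ is final, i.e. if and only if $\underline\phi\in\mathfrak f(\underline Q,\underline i,\underline F)$; symmetrically $a\in\mathrm L(\mathfrak m\underline A')$ if and only if $\underline\phi'\in\mathfrak f(\underline Q',\underline i',\underline F')$.

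To close the argument I would invoke the $1$-uniformity of $\mathfrak m$: as $\mathrm L(A_j)=\mathrm L(A_j')$ for all $j$, we obtain $\mathrm L(\mathfrak m\underline A)=\mathrm L(\mathfrak m\underline A')$, so in particular $a$ lies in one output language precisely when it lies in the other. Combining this with the previous paragraph gives $\underline\phi\in\mathfrak f(\underline Q,\underline i,\underline F)$ if and only if $\underline\phi'\in\mathfrak f(\underline Q',\underline i',\underline F')$, which is the claim.

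The step I expect to be the main obstacle is the reduction to unary languages: one must check that the characteristic sequence really is the indicator sequence of the language recognised by the single-letter DFA (so that $\underline A$ and $\underline A'$ are legitimate inputs to which $1$-uniformity applies), and that pointwise equality of the sequences — not merely equality of their eventually periodic tails — yields genuine equality $\mathrm L(A_j)=\mathrm L(A_j')$. Beyond this, the only delicate bookkeeping is the composition order, ensuring that reading $a$ once from $(\mathrm{Id}_{Q_1},\ldots,\mathrm{Id}_{Q_k})$ lands exactly on $\underline\phi$ rather than on one of its iterates; once this is pinned down, the remainder is a direct application of the $1$-uniformity hypothesis.
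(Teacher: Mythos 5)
Your proposal is correct and follows essentially the same route as the paper's own proof: both construct unary DFAs over $\{a\}$ whose letter acts by $\underline\phi$ (resp.\ $\underline\phi'$), deduce $\mathrm L(A_j)=\mathrm L(A'_j)$ from the equality of characteristic sequences, and conclude by applying $1$-uniformity to the membership of the single letter $a$. The only cosmetic difference is that you obtain both directions of the equivalence at once from $\mathrm L(\mathfrak m\underline A)=\mathrm L(\mathfrak m\underline A')$, whereas the paper first reduces to one implication by symmetry.
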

\begin{proof}
  Suppose that $\chi_{\underline i,\underline F}^{\underline \phi}=\chi_{\underline i,\underline F}^{\underline \phi'}$. Because of the symmetry between $\phi$ and $\phi'$ in the lemma, we only need to show that, if $\underline \phi\in \mathfrak f(\underline Q,\underline i,\underline F)$ then $\underline \phi'\in \mathfrak f(\underline Q',\underline i',\underline F')$. Let us therefore suppose that $\underline \phi\in \mathfrak f(\underline Q,\underline i,\underline F)$.
  Let $\underline A$ and $\underline A'$ be $k$-tuples of DFAs with $A_\ell=(\{a\},Q_\ell,i_\ell,F_\ell,\alpha_l)$ and $A'_\ell=(\{a\},Q'_\ell,i'_\ell,F'_\ell,\alpha'_\ell)$ such that $\alpha^a=\phi$ and $\alpha'^a=\phi'$. Since $\chi_{\underline i,\underline F}^{\underline \phi}=\chi_{\underline i',\underline F'}^{\underline \phi'}$, for any $\ell\in\{1,\ldots,k\}$, $\phi_\ell^p(i_\ell)\in F_\ell$ if and only if ${\phi'}_\ell^p(i'_\ell)\in F'_l$. Furthermore, ${\alpha_\ell}^{a^p}=\phi_\ell^p$ and ${\alpha'_\ell}^{a^p}={\phi'_\ell}^p$. Therefore, for any $\ell\in\{1,\ldots,k\}$, $\mathrm{L}(A_\ell)=\mathrm{L}(A'_\ell)$. Since $\mathfrak d (\underline i,\underline F,\underline \alpha^a)(\mathrm{Id}_{Q_1},\ldots,\mathrm{Id}_{Q_k})=\underline \phi$ and $\underline \phi\in \mathfrak f(\underline Q,\underline i,\underline F)$, we have $a\in\mathrm L(\mathfrak m\underline A)$. Furthermore, $\mathfrak m$ is $1$-uniform, and so we have $a\in \mathrm L(\mathfrak m\underline A')$. This implies that $\underline \phi'=\mathfrak d (\underline i',\underline F',\underline \alpha'^a)(\mathrm{Id}_{Q'_1},\ldots,\mathrm{Id}_{Q'_k})\in \mathfrak f(\underline Q',\underline i',\underline F')$, which concludes our proof.\cqfd
\end{proof}
The above result invites us to represent the third coordinate $\mathfrak f$ of a standard friendly modifier by a set of characteristic functions. In fact, Proposition \ref{th-diag}, proven in Section \ref{4.4}, shows that there is a one-to-one correspondence between standard friendly modifiers and subsets of $\mathcal U_k$. Therefore, we now define an application $\mathbf{mod}$ that allows us to compute a standard friendly $k$-modifier from any subset of $\mathcal U_k$.
\begin{definition}\label{def-root}
  For any $E\subseteq \mathcal U_k$, we denote by $\mathbf{mod}(E)$ the friendly standard modifier $(\mathfrak Q,\mathfrak i,\mathfrak f,\mathfrak d)$ with, for all state configurations $(\underline Q,\underline i,\underline F)$, $\mathfrak f(\underline Q,\underline i,\underline F)=\left\{\underline \phi\in Q_1^{Q_1}\times\cdots\times Q_k^{Q_k}\;\mid\; \chi_{\underline i,\underline F}^{\underline \phi}\in E\right\}$.
\end{definition}
As a corollary of Proposition \ref{lemma-root}, any $1$-uniform friendly standard $k$-modifer can be constructed this way from some subset of $\mathcal U_k$. In other words,
\begin{corollary}\label{cor-root}
  The set of $1$-uniform friendly standard $k$-modifiers $\mathcal M_k$ is a subset of the image of $\mathbf{mod}$.
\end{corollary}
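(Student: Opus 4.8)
The plan is to show directly that every element of $\mathcal M_k$ lies in the image of $\mathbf{mod}$, by producing, for an arbitrary $1$-uniform friendly standard $k$-modifier $\mathfrak m=(\mathfrak Q,\mathfrak i,\mathfrak f,\mathfrak d)$, an explicit set $E\subseteq\mathcal U_k$ with $\mathbf{mod}(E)=\mathfrak m$. The crucial reduction is that, by Definition \ref{def-stan}, a standard friendly modifier is entirely determined by its third component; since $\mathbf{mod}(E)$ is by construction standard friendly, it suffices to arrange that the $\mathfrak f$-component of $\mathbf{mod}(E)$ coincides with $\mathfrak f$ on \emph{every} state configuration.

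The natural candidate is the collection of characteristic sequences coming from final tuples, namely
\[
E=\left\{\underline u\in\mathcal U_k\ \mid\ \underline u=\chi_{\underline i,\underline F}^{\underline\phi}\text{ for some state configuration }(\underline Q,\underline i,\underline F)\text{ and some }\underline\phi\in\mathfrak f(\underline Q,\underline i,\underline F)\right\},
\]
which is manifestly a subset of $\mathcal U_k$. First I would check the easy inclusion: if $\underline\phi\in\mathfrak f(\underline Q,\underline i,\underline F)$, then $\chi_{\underline i,\underline F}^{\underline\phi}\in E$ by construction, so $\underline\phi$ belongs to the $\mathfrak f$-component of $\mathbf{mod}(E)$ by Definition \ref{def-root}. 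This shows $\mathfrak f(\underline Q,\underline i,\underline F)$ is contained in the corresponding component of $\mathbf{mod}(E)$.

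The reverse inclusion is where Proposition \ref{lemma-root} carries the argument. Suppose $\underline\phi$ lies in the $\mathfrak f$-component of $\mathbf{mod}(E)$ at $(\underline Q,\underline i,\underline F)$, that is, $\chi_{\underline i,\underline F}^{\underline\phi}\in E$. Unwinding the definition of $E$ extracts a (possibly different) state configuration $(\underline Q',\underline i',\underline F')$ together with some $\underline\phi'\in\mathfrak f(\underline Q',\underline i',\underline F')$ whose characteristic sequence equals that of $\underline\phi$. Since $\mathfrak m$ is a $1$-uniform standard friendly modifier, Proposition \ref{lemma-root} applies to these two configurations with equal characteristic sequences and transfers finality, giving $\underline\phi\in\mathfrak f(\underline Q,\underline i,\underline F)$. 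Combining the two inclusions shows that the $\mathfrak f$-components agree on all configurations, hence $\mathfrak m=\mathbf{mod}(E)$, and as $\mathfrak m$ was arbitrary, $\mathcal M_k$ is contained in the image of $\mathbf{mod}$.

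I do not expect a genuine obstacle here: the corollary is in essence a repackaging of Proposition \ref{lemma-root}, whose content is precisely that finality depends only on the characteristic sequence and not on the ambient state configuration. The only point requiring care is the quantifier bookkeeping, both in defining $E$ as a union over the class of all state configurations (which is harmless since the characteristic sequences land in the fixed set $\mathcal U_k$) and in correctly extracting the witness configuration $(\underline Q',\underline i',\underline F')$ when proving the nontrivial inclusion.
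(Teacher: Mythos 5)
Your proposal is correct and matches the paper's own proof essentially verbatim: the paper defines the same set $E$ of characteristic sequences arising from final tuples across all state configurations, obtains one inclusion by construction, and uses Proposition \ref{lemma-root} for the other. No further comment is needed.
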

\begin{proof}
  Let $\mathfrak m=(\mathfrak Q,\mathfrak i,\mathfrak f,\mathfrak d)$ be a $1$-uniform friendly standard $k$-modifier. Let $E$ be the set of all sequences $u\in\mathcal U_k$ such that there exists a state configuration $(\underline Q,\underline i,\underline F)$ and $\underline \phi\in\mathfrak f(\underline Q,\underline i,\underline F)$ with $\chi_{\underline i,\underline F}^{\underline \phi}=u$. For any state configuration $(\underline Q,\underline i,\underline F)$, if $\underline \phi\in Q_1^{Q_1}\times Q_k^{Q_k}$ and $\chi_{\underline i,\underline F}^{\underline \phi}\in E$, then, by Proposition \ref{lemma-root}, $\phi\in f(\underline Q,\underline i,\underline F)$. The converse is obvious, and we have $\mathfrak f(\underline Q,\underline i,\underline F)=\{\underline \phi\in Q_1^{Q_1}\times Q_k^{Q_k}\;\mid\; \chi_{\underline i,\underline F}^{\underline \phi}\in E\}$. Therefore, $\mathfrak m=\mathbf{mod}(E)$.\cqfd
\end{proof}

\subsection{Friendly operations}\label{4.4}
Examples \ref{ex-sqrt}, \ref{ex-xor}, \ref{ex-sqrt2} and \ref{ex-xor2} show that square root and symmetrical difference can be described by friendly modifiers, and therefore by standard friendly modifiers. These constructions naturally extend to any $k$-th root operation and to any $k$-ary boolean operation. Therefore, by Proposition \ref{prop-frcomp}, any composition of a $k$-ary boolean operation and some roots of languages is described by a standard friendly modifier. These operations are not the only ones to fall in the scope of our study. For instance, the operation $\mathbf{Root}$ \cite{KLS05}, defined by $\mathbf{Root}(L)=\bigcup\limits_{p=1}^{+\infty}\sqrt[p]{L}$, may also be described by a friendly modifier. To capture this kind of operations, we extend the notion of boolean operations to infinite arity.

\begin{definition}
 A \emph{boolean function} is a function from $\{0,1\}^{\mathbb N}$ into $\{0,1\}$. Every boolean function $\mathbf{b}$ defines a boolean operation $\boxtimes_{\mathbf{b}}$ producing a language when acting over sequences of languages in the following way : for any sequence of languages $(L_p)_{p\in\mathbb N}$, a word $w$ is in $\boxtimes_{\mathbf b}((L_p)_{p\in\mathbb N})$ if and only if there exists a sequence $v$ in $\{0,1\}^{\mathbb N}$ with $\mathbf{b}(v)=1$ such that, for all $p\in\mathbb N$, $w\in L_p$ if and only if $v_p=1$.
\end{definition}
\begin{example}
Consider the boolean function $\mathbf{b}$ defined by, for any sequence $v$ in $\{0,1\}^{\mathbb N}$ , $\mathbf{b}(v)=1$ if and only if, either for all $p\in \mathbb N$, $v_{p}=1$, or for all $p\in \mathbb N$, $v_{p}=0$. We have, for any sequence of regular languages $(L_p)_{p\in\mathbb N}$, $w\in \boxtimes((L_p)_{p\in\mathbb N})$ if and only if either for all $p\in \mathbb N$, $w\in L_p$, or for all $p\in \mathbb N$, $w\notin L_p$. This assertion translates into an equation as $\boxtimes((L_p)_{p\in\mathbb N})=\bigcap\limits_{p=0}^{+\infty}L_p\cup \bigcap\limits_{p=0}^{+\infty}L_p^{c}$.
\end{example}
We now have the tools to define friendly operations as the composition of a boolean operation and some roots of languages, and we show in Proposition \ref{th-diag} that there is a one-to-one correspondance between friendly operations, $1$-uniform standard friendly modifiers and subsets of $\mathcal U_k$.

\begin{definition}
  A $k$-ary operation over regular languages $\otimes$ is friendly if there exists a boolean operation $\boxtimes$ such that, for any $k$-tuples of regular languages $\underline L$, \[\otimes(\underline L)=\boxtimes(\sqrt[0]{L_1},\sqrt[0]{L_2},\ldots,\sqrt[0]{L_k},\sqrt[1]{L_1},\sqrt[1]{L_2},\ldots,\sqrt[1]{L_k},\ldots,\sqrt[p]{L_1},\sqrt[p]{L_2}\ldots,\sqrt[p]{L_k},\ldots).\]
\end{definition}
Recall that $\sqrt[0]{L}=\Sigma^{*}$ if 
\begin{definition}\label{def-scal}
  Let $\underline u\in \mathcal U_k$. For any $k$-tuple of regular languages $\underline L$, we define $\langle \underline u,\underline L \rangle=\bigcap\limits_{(p,j)\in \mathbb N\times\{1,\ldots,k\}}E_{p,j}$ where
  $ E_{p,j}=\sqrt[p]{L_j}$ if and
  $E_{p,j}=\sqrt[p]{L_j}^c$ otherwise.\\
  We denote by $\langle \underline u,\cdot\rangle $ the $k$-ary operation over regular languages such that, for any $k$-tuple of regular languages $\underline L$, $\langle \underline u,\cdot\rangle (\underline L)=\langle \underline u,\underline L\rangle $.
\end{definition}
  
\begin{example}
  Let $\underline u\in\mathcal U_2$ be such that $u_{p,j}=1$ if and only if $p+j$ is even. Then, for any two regular languages $L_1$ and $L_2$, \[\langle \underline u,(L_1,L_2)\rangle =(\sqrt[0]{L_1}^c\cap\sqrt[1]{L_1}\cap \sqrt[2]{L_1}^c \cap \sqrt[3]{L_1} \cap \sqrt[4]{L_1}^c\cap\ldots)\bigcap (\sqrt[0]{L_2}\cap\sqrt[1]{L_2}^c\cap \sqrt[2]{L_2} \cap \sqrt[3]{L_2}^c \cap \sqrt[4]{L_2}\cap\ldots)\]
\end{example}
We denote by $\mathcal O_k$ the set of $k$-ary friendly operations.
\begin{definition}
  Let $\mathbf{op}$ be the application from $2^{\mathcal U_k}$ into $\mathcal O_k$ such that, for any $E\subseteq \mathcal U_k$, $\mathbf{op}(E)$ denotes the friendly $k$-ary operation $\bigcup\limits_{u\in E}\langle u,\cdot\rangle$.
\end{definition}
Notice that $\langle \underline u,\underline L \rangle$ is the set of words $w$ such that, for all $(j,p)\in \{1,\ldots,k\}\times\mathbb N\}$, $w\in \sqrt[p]{L_j}$ if and only if $u_{p,j}=1$.
  By remarking that, if $L$ is a regular language then, for any word $w$, the sequence $u=(u_{i})_{i\in\mathbb N}$ satisfying $u_{i}=1$ if $w\in\sqrt[i]L$ belongs to $\mathcal U_{1}$, we show that some of the expected operations can be simulated images by $\mathbf{op}$ of some subset of $\mathcal U_{k}$
  \begin{proposition}\label{prop-root}
	  Any finite composition of  roots, union, intersection and complement acts on regular languages as an operator $\mathbf{op}(E)$ for some $E\in2^{\mathcal U_{k}}$. 
  \end{proposition}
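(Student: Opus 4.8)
The plan is to reduce membership in the output language to a purely combinatorial test on a single eventually periodic type attached to each word. Fix a $k$-tuple of regular languages $\underline L$. For every word $w$ I would introduce its \emph{type} $\tau(w)\in\mathcal U_k$ defined by $\tau(w)_{p,j}=1$ if $w\in\sqrt[p]{L_j}$ and $\tau(w)_{p,j}=0$ otherwise. The remark preceding the statement guarantees that each coordinate sequence $(\tau(w)_{p,j})_{p\in\mathbb N}$ is eventually periodic, so indeed $\tau(w)\in\mathcal U_k$; this is the only place where regularity of the $L_j$ is used. With this notation, the remark identifying $\langle \underline u,\underline L\rangle$ with its set of words says exactly that $\langle \underline u,\underline L\rangle=\{w\mid \tau(w)=\underline u\}$, whence $\mathbf{op}(E)(\underline L)=\bigcup_{\underline u\in E}\langle \underline u,\underline L\rangle=\{w\mid\tau(w)\in E\}$ for every $E\subseteq\mathcal U_k$.

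The engine of the proof is the interaction of $\tau$ with roots. Since $(w^m)^p=w^{mp}$, I have the identity $\tau(w^m)_{p,j}=\tau(w)_{mp,j}$ for all $m,p\in\mathbb N$ and all $j$, valid also for $m=0$ because then $w^0=\varepsilon$ and $\tau(w)_{0,j}=[\varepsilon\in L_j]$ is already independent of $w$. I would package this by the \emph{stretch} map $\sigma_m:\mathcal U_k\to\mathcal U_k$, $(\sigma_m(\underline u))_{p,j}=u_{mp,j}$, which preserves $\mathcal U_k$ because an arithmetic subsequence of an eventually periodic sequence is again eventually periodic; then $\tau(w^m)=\sigma_m(\tau(w))$.

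Next I would prove, by structural induction on the expression $\Phi$ built from the variables $L_1,\ldots,L_k$ using $\sqrt[m]{\cdot}$, $\cup$, $\cap$ and complement, that there is a map $\widehat\Phi:\mathcal U_k\to\{0,1\}$, \textbf{depending only on $\Phi$ and not on $\underline L$}, such that $w\in\Phi(\underline L)\iff\widehat\Phi(\tau(w))=1$ for all $\underline L$ and all $w$. The base case $\Phi=L_j$ uses $\widehat\Phi(\underline u)=u_{1,j}$; the cases $\cup,\cap$ and complement take the obvious boolean combinations of the $\widehat{\Psi_\ell}$; and the case $\Phi=\sqrt[m]{\Psi}$ sets $\widehat\Phi=\widehat\Psi\circ\sigma_m$, which is correct because $w\in\sqrt[m]{\Psi(\underline L)}$ iff $w^m\in\Psi(\underline L)$ iff $\widehat\Psi(\tau(w^m))=1$ iff $\widehat\Psi(\sigma_m(\tau(w)))=1$. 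Finally, taking $E=\{\underline u\in\mathcal U_k\mid\widehat\Phi(\underline u)=1\}$ gives $\mathbf{op}(E)(\underline L)=\{w\mid\tau(w)\in E\}=\{w\mid\widehat\Phi(\tau(w))=1\}=\Phi(\underline L)$ for every $\underline L$, which is the claim.

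The step I expect to be delicate is the root case of the induction, and in particular checking that it remains uniform in $\underline L$ when $m=0$: there $\sqrt[0]{\Psi(\underline L)}$ is the constant language $\Sigma^*$ or $\emptyset$, and one must see that the dependence on $\underline L$ is entirely absorbed into the values $\tau(w)_{0,j}=[\varepsilon\in L_j]$, i.e. into the $0$-indexed coordinates of the type, so that $\widehat\Phi$ itself stays independent of $\underline L$. The other mild point is confirming that $\sigma_m$ really sends $\mathcal U_k$ into $\mathcal U_k$, so that $\widehat\Psi\circ\sigma_m$ is legitimately a map on $\mathcal U_k$; everything else is routine bookkeeping.
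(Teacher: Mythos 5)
Your proposal is correct, and it is in fact a more complete argument than the one the paper gives. Both proofs rest on the same basic object: the eventually periodic profile $\tau(w)\in\mathcal U_k$ recording, for each $j$ and $p$, whether $w^p\in L_j$, with regularity of the $L_j$ used exactly once to guarantee eventual periodicity via the finiteness of the quotients $(w^p)^{-1}L_j$. The paper, however, only verifies the atomic cases --- that $\sqrt[j]{\ }$, complement, union and intersection are each individually of the form $\mathbf{op}(E)$ --- and then disposes of arbitrary expressions with the single phrase ``by iterating these constructions,'' leaving implicit why the class of operations of the form $\mathbf{op}(E)$ is closed under substitution, and in particular how a root applied to a compound subexpression is handled. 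Your structural induction supplies precisely this missing closure argument: the stretch map $\sigma_m$ with $\tau(w^m)=\sigma_m(\tau(w))$, together with the observation that $\sigma_m$ preserves eventual periodicity, is exactly the device needed to push a root through an arbitrary already-translated subexpression $\widehat\Psi$, and your uniformity check for $m=0$ (absorbing the dependence on $\underline L$ into the $0$-indexed coordinates of the type) is a genuine point that the paper never addresses. In short, you prove the same statement by the same underlying idea, but your version makes rigorous the one step the paper waves at, at the cost of a slightly heavier formal apparatus (expressions, the maps $\widehat\Phi$, and the family $\sigma_m$).
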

  \begin{proof}
   Let  $L$ be a regular language. We define, for any word $w$, the sequence $u(w)=(u_{i}(w))_{i\in\mathbb N}$ such that $u_{i}(w)=1$ if $w^{i}\in L$ and $0$ otherwise. 
  Since the set of the quotients $(w^{i})^{-1}L$ is finite, the sequence $((w^{i})^{-1}L)_{i\in\mathbb N}$ is eventually periodic and so the sequence $u(w)$ is also eventually periodic. 
  As a consequence $\mathbf{op}(\{u\in\mathcal U_{1}\mid u_{j}=1\})(L)=\sqrt[j]{L}$ for any regular language $L$.  Indeed, if $w\in\sqrt[j]L$ then $u(w)$ is eventually periodic and $u(w)_{i}=1$. Furthermore, from the construction $w\in\langle u(w),L\rangle$.{}
  In the same way, $\mathbf{op}(\{u\in \mathcal U_{1}\mid u_{1}=0\})(L)=L^{c}$,
   $\mathbf{op}(\{(u,v)\in\mathcal U_{2}\mid u_{1}=1\mbox{ and }v_{1}=1\})(L_{1},L_{2})=L_{1}\cap L_{2}$,
    and  $\mathbf{op}(\{(u,v)\in\mathcal U_{2}\mid u_{1}=1\mbox{ or }v_{1}=1\})(L_{1},L_{2})=L_{1}\cup L_{2}$. 
  By iterating these construction, any $k$-ary operator which is a combination of $\sqrt[i]{\ }$, complement, union and intersection can be simulated on regular languages by the action of an operation $\mathbf{op}(E)$ for some $E\in2^{\mathcal U_{k}}$.\cqfd
\end{proof}
  \begin{example}
	  If $L_1, L_{2}$ and $L_{3}$ are regular languages then we have
	  \[{}
	  (\sqrt[i]L_{1}\cup L_{2})\cap L_{3}^{c}=\mathbf{op}(\{(u,v,w)\in\mathcal U_{3}\mid (u_{i}=1\mbox{ or }v_{1}=1)\mbox{ and }w_{1}=0\})(L_{1},L_{2},L_{3}).
	  \]
  \end{example}
  Notice that when acting on $2^{\Sigma^{*}}$ the operator $\mathbf{op}(\{u\in\mathcal U_{1}\mid u_{j}=1\})$ is distinct from $\sqrt[j]{L}$, but the two operators coincide when acting on regular languages.
  
The following lemma proves that there is a one-to-one correspondence between subsets of $\mathcal U_k$ and $k$-ary friendly operations.
\begin{lemma}\label{lemma-bij}
  The application $\mathbf{op}$ is bijective.
\end{lemma}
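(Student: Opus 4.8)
The plan is to prove injectivity and surjectivity separately, both resting on a single device: the \emph{finality profile} of a word. Fix a $k$-tuple of regular languages $\underline L$. For a word $w$, let $\underline u(w)$ be the tuple defined by $u(w)_{p,j}=1$ iff $w\in\sqrt[p]{L_j}$. First I would observe that $\underline u(w)\in\mathcal U_k$: for each $j$ the sequence $(u(w)_{p,j})_{p\in\mathbb N}$ is eventually periodic because there are only finitely many quotients $(w^p)^{-1}L_j$ --- exactly the argument already used in the proof of Proposition \ref{prop-root}. By the remark following the definition of $\mathbf{op}$, $w\in\langle\underline u,\underline L\rangle$ iff $\underline u=\underline u(w)$; consequently the sets $\langle\underline u,\underline L\rangle$ are pairwise disjoint, every word lies in exactly one of them, and for every $E\subseteq\mathcal U_k$ one has
\[\mathbf{op}(E)(\underline L)=\bigcup_{\underline u\in E}\langle\underline u,\underline L\rangle=\{w\in\Sigma^*\mid\underline u(w)\in E\}.\]
This identity is the workhorse for both directions.

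For injectivity, given $E_1\neq E_2$ I would pick $\underline u$ in the symmetric difference, say $\underline u\in E_1\setminus E_2$, and \emph{realise} it as a profile. Taking $\Sigma=\{a\}$ and, for each $j$, the unary language $L_j=\{a^p\mid u_{p,j}=1\}$ --- which is regular precisely because $(u_{p,j})_p$ is eventually periodic --- the word $w=a$ satisfies $a\in\sqrt[p]{L_j}$ iff $a^p\in L_j$ iff $u_{p,j}=1$ (the case $p=0$ agreeing with the convention $\sqrt[0]{L_j}=\Sigma^*$ iff $\varepsilon\in L_j$), so $\underline u(a)=\underline u$. The displayed identity then gives $a\in\mathbf{op}(E_1)(\underline L)$ but $a\notin\mathbf{op}(E_2)(\underline L)$, so the two operations differ on $\underline L$ and $\mathbf{op}$ is injective.

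For surjectivity, I would start from an arbitrary $\otimes\in\mathcal O_k$ witnessed by a boolean function $\mathbf b$, so that $\otimes(\underline L)=\boxtimes_{\mathbf b}(\ldots,\sqrt[p]{L_j},\ldots)$ along the enumeration fixed in the definition of friendly operations. Writing $\widehat{\underline u}\in\{0,1\}^{\mathbb N}$ for the flattening of $\underline u$ along that same enumeration, I would set $E=\{\underline u\in\mathcal U_k\mid\mathbf b(\widehat{\underline u})=1\}$. The point is that in the definition of $\boxtimes_{\mathbf b}$ the witnessing sequence $v$ is forced to coincide with the profile of $w$; hence $w\in\otimes(\underline L)$ iff $\mathbf b(\widehat{\underline u}(w))=1$ iff $\underline u(w)\in E$ iff $w\in\mathbf{op}(E)(\underline L)$, giving $\mathbf{op}(E)=\otimes$.

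The main obstacle is conceptual rather than computational and lives in the surjectivity step: the boolean function quantifies existentially over \emph{all} of $\{0,1\}^{\mathbb N}$, including sequences that are not eventually periodic, whereas $\mathbf{op}$ only ever sees elements of $\mathcal U_k$. Turning the existential ``there is $v$ with $\mathbf b(v)=1$ matching $w$'' into the pointwise test ``$\mathbf b$ of the profile of $w$'' requires the uniqueness of the matching $v$ together with the fact that, on regular inputs, this $v$ is automatically eventually periodic; everything else reduces to the profile identity of the first paragraph.
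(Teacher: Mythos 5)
Your proposal is correct and follows essentially the same route as the paper: injectivity via realising a distinguishing tuple $\underline u$ as the profile of the word $a$ over the unary languages $L_j=\{a^p\mid u_{p,j}=1\}$, and surjectivity via the observation that the witnessing sequence in $\boxtimes_{\mathbf b}$ is forced to equal the profile of $w$, which on regular inputs is automatically eventually periodic (the paper phrases this through the transition functions $(\delta_j^w)^p(i_j)$ rather than the quotients $(w^p)^{-1}L_j$, but these are the same fact). Your ``profile identity'' is just a cleaner packaging of the partition of $\Sigma^*$ into the sets $\langle\underline u,\underline L\rangle$ that the paper uses implicitly.
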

\begin{proof}
  We first show that $\mathbf{op}$ is surjective. Let $\mathcal V_k=(\{0,1\}^{\mathbb N})^{k}$ be the set of all $k$-tuples of sequences with values in $\{0,1\}$. Let $\otimes$ be a friendly $k$-ary operation and $\boxtimes_{\mathbf b}$ be a boolean operation such that, for any $k$-tuples of regular languages $\underline L$, \[\otimes=\boxtimes_{\mathbf b}(\sqrt[0]{L_1},\dots,\sqrt[0]{L_k},\sqrt[1]{L_1},\ldots,\sqrt[1]{L_k},\ldots,\sqrt[p]{L_k},\ldots,\sqrt[p]{L_k},\ldots).\]
   Let $E=\{\underline u\in \mathcal U_k \;\mid\; \mathbf{b}(u_{1,0},\dots,u_{k,0},u_{1,1},\dots,u_{k,1},\dots,u_{1,p},\dots,u_{k,p},\dots)=1\}$ and $E'=\{\underline v\in \mathcal V_k \;\mid\; \mathbf{b}(v_{1,0},\dots,v_{k,0},v_{1,1},\dots,v_{k,1},\dots,v_{1,p},\dots,v_{k,p},\dots)=1\}$.\\ We check that $\otimes(\underline L)=(\mathbf{op}(E))(\underline L)$.
For any $k$-tuple of regular languages $\underline L$, we have
  \[\otimes(\underline L)=\bigcup\limits_{\underline v\in E'}\{w\in\Sigma^*\;\mid\; \forall (j,p)\in\{1,\ldots,k\}\times\mathbb N, w\in \sqrt[p]{L_j} \Leftrightarrow v_{j,p}=1\}.\]
  Notice that the union above is over a set  which may involve non eventually periodic sequences. We prove that it is not actually the case.
  If $\underline A$ is a $k$-tuple of DFA with $A_j=(\Sigma,Q_j,i_j,F_j,\delta_j)$ such that, for all $j\in\{1,\ldots,k\}$, $\mathrm L(A_j)=L_j$, then $w\in \sqrt[p]{L_j}$ if and only if $(\delta^w_j)^p(i_j)\in F_j$. Therefore, if there exists a word $w$ and a $k$-tuple of sequences $\underline v\in\mathcal V_k$ such that, for all $(j,p)\in\{1,\ldots,k\}\times\mathbb N$, $w\in \sqrt[p]{L_j}$ if and only if $v_{j,p}=1$, then $(\delta^w_j)^p(i_j)\in F_j$ if and only if $v_{j,p}=1$, which implies that $(v_{j,p})_{p\in\mathbb N}$ is eventually periodic. To summarize, if $\{w\in\Sigma^*\;\mid\; \forall (j,p)\in\{1,\ldots,k\}\times\mathbb N, w\in \sqrt[p]{L_j} \Leftrightarrow v_{j,p}=1\}\neq\emptyset$, then $\underline v\in\mathcal U_k$. We thus have
  \[\otimes(\underline L)=\bigcup\limits_{\underline u\in E}\{w\in\Sigma^*\;\mid\; \forall (j,p)\in\{1,\ldots,k\}\times\mathbb N, w\in \sqrt[p]{L_j} \Leftrightarrow u_{j,p}=1\}=\bigcup\limits_{\underline u\in E}\langle \underline u,\underline L\rangle=(\mathbf{op}(E))(\underline L).\]
  
   We now prove that $\mathbf{op}$ is injective. Let $E,E'\subseteq\mathcal U_k$ and $\underline u\in\mathcal U_k$ such that $\underline u\in E$ and $\underline u\notin E'$.
  Since, for any $j\in\{1,\ldots,k\}$, $(u_{j,l})_{l\in\mathbb N}$ is eventually periodic, the languages $L_j=\{a^p\;\mid\; p\in\mathbb N\land u_{j,p}=1\}$ are regular. 
  We have $a\in\sqrt[p]{L_j}$ if and only if $u_{j,p}=1$. Therefore, from Definition \ref{def-scal}, for any $\underline u'\in\mathcal U_k$, $a\in\langle \underline u',\underline L\rangle$ if and only if $\underline u'=\underline u$. It follows that if $\otimes=\mathbf{op}(E)$ and $\otimes'=\mathbf{op}(E')$, then $a\in \otimes\underline L$ and $a\notin\otimes'\underline L$ because $\underline u\in E\setminus E'$. As a consequence, $\otimes\neq\otimes'$ and $\mathbf{op}$ is injective.
\cqfd
\end{proof}
\begin{example}
	For any regular language $L$, we have \[\mathbf{Root}(L)=\mathbf{op}(\{u\in\mathcal U_{1}\mid\mbox{ there exists } i>0\mbox{ such that }u_{i}=1\})(L)=\bigcup_{i\geq 1}\sqrt[i]L.\]
\end{example}
We now show that any operation described by a friendly modifier is friendly.
\begin{lemma}\label{lemma-friendly}
  Let $E\subseteq\mathcal U_k$, $\mathbf{mod}(E)$ describes $\mathbf{op}(E)$. 
\end{lemma}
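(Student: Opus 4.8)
The plan is to establish directly that for any $k$-tuple of DFAs $\underline A$ with $A_j=(\Sigma,Q_j,i_j,F_j,\delta_j)$ and $\mathrm L(A_j)=L_j$, one has $\mathrm L(\mathbf{mod}(E)\,\underline A)=\mathbf{op}(E)(\underline L)$. Since the right-hand side depends only on the languages $\underline L$ and not on the chosen DFAs, this single computation proves simultaneously that $\mathbf{mod}(E)$ is $1$-uniform (so that it does describe an operation) and that the operation it describes is exactly $\mathbf{op}(E)$.

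First I would compute, for a word $w=a_1\cdots a_l$, the state $\delta^w(\mathfrak i(\underline Q,\underline i,\underline F))$ reached in $\mathbf{mod}(E)\,\underline A$ from its initial state $(\mathrm{Id}_{Q_1},\ldots,\mathrm{Id}_{Q_k})$. Because $\mathbf{mod}(E)$ is standard, its transition function on a letter $a$ sends $\underline\psi$ to $(\delta_1^a\circ\psi_1,\ldots,\delta_k^a\circ\psi_k)$, and this $\mathfrak d$ is a morphism in its third coordinate. Exactly as in the proof of Lemma~\ref{lemma-can}, an immediate induction on $l$ then gives $\delta^w(\mathrm{Id}_{Q_1},\ldots,\mathrm{Id}_{Q_k})=(\delta_1^w,\ldots,\delta_k^w)=\underline{\delta^w}$. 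Consequently $w\in\mathrm L(\mathbf{mod}(E)\,\underline A)$ if and only if $\underline{\delta^w}\in\mathfrak f(\underline Q,\underline i,\underline F)$, which by Definition~\ref{def-root} means $\chi_{\underline i,\underline F}^{\underline{\delta^w}}\in E$.

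The key identification is then that $\chi_{\underline i,\underline F}^{\underline{\delta^w}}$ is precisely the root-membership sequence of $w$. By Definition~\ref{def-char}, its $(j,p)$ entry $u_{j,p}$ equals $1$ exactly when $(\delta_j^w)^p(i_j)\in F_j$; since $(\delta_j^w)^p(i_j)=\delta_j(i_j,w^p)$ and $A_j$ recognises $L_j$, this holds iff $w^p\in L_j$, that is, iff $w\in\sqrt[p]{L_j}$. This sequence lies in $\mathcal U_k$ because each $\delta_j^w$ is a map of a finite set, so each of its coordinate sequences is eventually periodic. Comparing with the characterisation recalled just after the definition of $\mathbf{op}$, the membership $w\in\langle\underline u,\underline L\rangle$ forces $\underline u$ to equal this unique root-membership sequence; hence $w\in\mathbf{op}(E)(\underline L)=\bigcup_{u\in E}\langle u,\underline L\rangle$ if and only if $\chi_{\underline i,\underline F}^{\underline{\delta^w}}\in E$. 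Chaining the two equivalences yields $w\in\mathrm L(\mathbf{mod}(E)\,\underline A)\iff w\in\mathbf{op}(E)(\underline L)$, which is the claim.

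I expect the only delicate points to be bookkeeping rather than conceptual: getting the composition order right in the inductive computation of $\delta^w$, so that the fixed initial state $(\mathrm{Id}_{Q_j})_j$ indeed accumulates into $\underline{\delta^w}$, and observing that although $\mathbf{op}(E)$ is defined as a union over $E$, the defining condition of $\langle\underline u,\underline L\rangle$ pins down $\underline u$ uniquely for each candidate word $w$, so the union membership collapses to the single test $\chi_{\underline i,\underline F}^{\underline{\delta^w}}\in E$. No genuine obstacle remains beyond these verifications.
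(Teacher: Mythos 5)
Your proposal is correct and follows essentially the same route as the paper's proof: compute the state $\underline{\delta^w}$ reached in $\mathbf{mod}(E)\,\underline A$, translate finality via Definition~\ref{def-root} into $\chi_{\underline i,\underline F}^{\underline{\delta^w}}\in E$, and identify that characteristic sequence with the root-membership sequence of $w$ so that the union defining $\mathbf{op}(E)$ collapses to a single test. Your explicit remarks that $(\delta_j^w)^p(i_j)=\delta_j(i_j,w^p)$ and that the computation also yields $1$-uniformity of $\mathbf{mod}(E)$ only make explicit what the paper leaves implicit.
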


\begin{proof}
  Let $\mathfrak m=\mathbf{mod}(E)$ with $\mathfrak m=(\mathfrak Q,\mathfrak i,\mathfrak f,\mathfrak d)$ and let $\otimes$ be the operations described by $\mathfrak m$. Let $\underline A$ be any $k$-tuple of DFA with $A_j=(\Sigma,\underline Q,\underline i,\underline F,\underline \delta)$. A word $a_1\cdots a_n$ is in $\mathrm{L}(\mathfrak m\underline A)$ if and only if \[\underline \delta^{a_1\cdots a_n}=(\mathfrak d(\underline i,\underline F,\underline \delta^{a_n})\circ \mathfrak d(\underline i,\underline F,\underline \delta^{a_{l-1}}) \circ \cdots \circ \mathfrak d(\underline i,\underline F,\underline \delta^{a_{1}}))(\mathrm{Id}_{Q_1},\ldots,\mathrm{Id}_{Q_k}) \in \mathfrak f(\underline Q,\underline i,\underline F).\] Equivalently, by Definition \ref{def-root}, $ \chi_{\underline i,\underline F}^{\underline \delta^{a_1\cdots a_n}}\in E$. But by Definition \ref{def-char}, $\chi_{\underline i,\underline F}^{\underline \delta^{a_1\cdots a_n}}$ is the only function $\underline u$ in $E$ such that, for any $ (p,j)\in \mathbb{N}\times\{1,\ldots,k\}$, ${(\delta_j ^{a_1\cdots a_n})}^p(i_j) \in F_j$ if and only if $u_{(p,j)}=1$. Therefore, by Definition \ref{def-scal}, $a_1\cdots a_n\in \mathrm{L}(\mathfrak m(A_1,\ldots,A_k))$ if and only if there exists $\underline u$ in $E$ such that $a_1\cdots a_n\in \langle \underline u,\underline L\rangle $. We thus have $\otimes(\mathrm{L}(A_1),\ldots,\mathrm{L}(A_k))=\bigcup\limits_{u\in E}\langle \underline u,(\mathrm{L}(A_1),\ldots,\mathrm{L}(A_k))\rangle $ and $\otimes=\mathbf{op}(E)$.\cqfd
\end{proof}

For any $k$-ary $1$-uniform modifier $\mathfrak m$, let $\mathbf{desc}$ be the application from $\mathcal M_k$ to $\mathcal O_k$ such that $\mathbf{desc}(\mathfrak m)$ denotes the regular $1$-uniform operation described by $\mathfrak m$. The main result of this section is that all applications of Figure \ref{fig-diag} are bijections and that the diagram  is commutative.
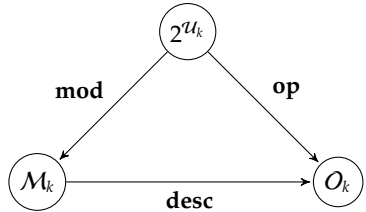
\begin{figure}[H]
  \begin{center}
    \begin{tikzpicture}[node distance=2cm]
      \node[state] (p1) {$\mathcal M_k$};
      \node[state] (p2) at (4,0){$\mathcal O_k$};
      \node[state] (p3) at (2,2){$2^{\mathcal U_k}$};
      \path[->]
      (p3) edge node {$\mathbf{op}$} (p2)
      (p3) edge node[swap] {$\mathbf{mod}$} (p1)
      (p1) edge node[swap] {$\mathbf{desc}$} (p2);
    \end{tikzpicture}
  \end{center}
  \caption{Commutative diagram for $\mathbf{op}$,$\mathbf{mod}$ and $\mathbf{desc}$.}\label{fig-diag}
\end{figure}
In other words,
\begin{proposition}\label{th-diag}
  The application $\mathbf{mod}$ is a bijection from $2^{\mathcal U_k}$ into $\mathcal M_k$, and $\mathbf{op}$ and $\mathbf{desc}$ are bijective. Furthermore, $\mathbf{desc}\circ\mathbf{mod}=\mathbf{op}$.
\end{proposition}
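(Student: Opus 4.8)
The statement is essentially a formal consequence of the three preceding results, assembled into a diagram chase, so almost no new computation is required. I would first observe that $\mathbf{mod}$ is well defined as a map \emph{into} $\mathcal M_k$: for any $E\subseteq\mathcal U_k$, the modifier $\mathbf{mod}(E)$ is friendly and standard directly by Definition \ref{def-root}, and it is $1$-uniform because Lemma \ref{lemma-friendly} shows that it describes $\mathbf{op}(E)$, i.e. that $\mathrm L(\mathbf{mod}(E)\underline A)$ depends only on the languages $\mathrm L(A_1),\ldots,\mathrm L(A_k)$. Hence $\mathbf{mod}(E)\in\mathcal M_k$. The same Lemma \ref{lemma-friendly}, read for every $E$, is exactly the identity $\mathbf{desc}(\mathbf{mod}(E))=\mathbf{op}(E)$, which is the commutativity $\mathbf{desc}\circ\mathbf{mod}=\mathbf{op}$ asserted at the end of the proposition; and the bijectivity of $\mathbf{op}$ is Lemma \ref{lemma-bij}.

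It then remains to deduce that $\mathbf{mod}$ and $\mathbf{desc}$ are bijections. For $\mathbf{mod}$, surjectivity onto $\mathcal M_k$ is Corollary \ref{cor-root} together with the well-definedness above, while injectivity follows from the commutativity: if $\mathbf{mod}(E)=\mathbf{mod}(E')$, then applying $\mathbf{desc}$ gives $\mathbf{op}(E)=\mathbf{op}(E')$, and the injectivity of $\mathbf{op}$ forces $E=E'$. Thus $\mathbf{mod}$ is a bijection from $2^{\mathcal U_k}$ onto $\mathcal M_k$.

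Finally, since $\mathbf{op}=\mathbf{desc}\circ\mathbf{mod}$ with both $\mathbf{op}$ and $\mathbf{mod}$ now known to be bijections, I would conclude that $\mathbf{desc}=\mathbf{op}\circ\mathbf{mod}^{-1}$ is a composition of bijections, hence itself a bijection; this settles the last claim and, a posteriori, makes the whole triangle of Figure \ref{fig-diag} commute. The only point demanding genuine care, and the closest thing to an obstacle, is the very first one: verifying that $\mathbf{mod}(E)$ really lands in $\mathcal M_k$, that is, that it is $1$-uniform and not merely a standard friendly modifier. Once this membership is secured, everything reduces to manipulation of the commuting triangle, all the substantive content having already been carried by Lemmas \ref{lemma-bij} and \ref{lemma-friendly} and by Corollary \ref{cor-root}.
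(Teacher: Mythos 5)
Your proposal is correct and follows essentially the same route as the paper: both rest entirely on Lemma \ref{lemma-bij}, Lemma \ref{lemma-friendly} and Corollary \ref{cor-root}, establish the commutativity $\mathbf{desc}\circ\mathbf{mod}=\mathbf{op}$ first, and then extract the bijectivity of $\mathbf{mod}$ and $\mathbf{desc}$ from the bijectivity of the composite together with the surjectivity of $\mathbf{mod}$. The only cosmetic difference is that you derive injectivity of $\mathbf{mod}$ by applying $\mathbf{desc}$ and invoking injectivity of $\mathbf{op}$, whereas the paper phrases the same fact as the composite being a bijection; your emphasis on checking that $\mathbf{mod}(E)$ is genuinely $1$-uniform (hence lands in $\mathcal M_k$) matches the paper's use of Lemma \ref{lemma-friendly} for exactly that purpose.
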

\begin{proof}
  First of all, we already know that $\mathbf{op}$ is a bijection by Lemma \ref{lemma-bij}. Lemma \ref{lemma-friendly} show that a friendly standard $k$-modifier in the image of $2^{\mathcal U_k}$ by $\mathbf{mod}$ is $1$-uniform. Therefore, by Corollary \ref{cor-root}, $\mathbf{mod}$ is a surjection from $2^{\mathcal U_k}$ into the set of $1$-uniform friendly standard $k$-modifiers. By Lemma \ref{lemma-friendly}, the image of $\mathcal M_k$ by $\mathbf{desc}$ is a subset of $\mathcal O_k$. Lemma \ref{lemma-friendly} also proves that $\mathbf{desc}\circ\mathbf{mod}=\mathbf{op}$. As a consequence, $\mathbf{desc}\circ\mathbf{mod}$ is a bijection, and the fact $\mathbf{mod}$ is a surjection implies that both $\mathbf{desc}$ and $\mathbf{mod}$ are bijections.\cqfd
\end{proof}
As an obvious consequence of Proposition \ref{th-diag} and Lemma \ref{lemma-can}, we have :
\begin{theorem}\label{th-frmod}
  Every friendly $k$-ary operation is described by a unique $1$-uniform standard friendly $k$-modifier. Conversely, any $1$-uniform friendly $k$-modifier describes a friendly $k$-ary operation.
\end{theorem}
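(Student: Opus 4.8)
The plan is to read off both assertions directly from the commutative diagram of Proposition~\ref{th-diag} together with the standardisation provided by Lemma~\ref{lemma-can}, so that the work is purely one of bookkeeping: the substantive content has already been established. I would begin by separating the statement into its two halves and treating each with the appropriate tool.

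For the first half, I would invoke the fact, recorded in Proposition~\ref{th-diag}, that $\mathbf{desc}$ is a bijection from $\mathcal M_k$ onto $\mathcal O_k$. Surjectivity of $\mathbf{desc}$ says precisely that every friendly $k$-ary operation $\otimes\in\mathcal O_k$ equals $\mathbf{desc}(\mathfrak m)$ for some $\mathfrak m\in\mathcal M_k$, i.e.\ is described by some $1$-uniform standard friendly $k$-modifier; this yields existence. Injectivity of $\mathbf{desc}$ then forces this $\mathfrak m$ to be unique, which is exactly the uniqueness claim. Thus the first sentence of the theorem is nothing more than a reformulation of the bijectivity of $\mathbf{desc}$.

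For the converse, I would start from an arbitrary $1$-uniform friendly $k$-modifier $\mathfrak m$, which need not be standard, and apply Lemma~\ref{lemma-can} to obtain the standard modifier $\cf{\mathfrak m}$. That lemma guarantees both that $\cf{\mathfrak m}$ is again $1$-uniform, hence $\cf{\mathfrak m}\in\mathcal M_k$, and that it describes the very same operation as $\mathfrak m$. Since $\mathbf{desc}$ maps $\mathcal M_k$ into $\mathcal O_k$ by Proposition~\ref{th-diag}, the operation $\mathbf{desc}(\cf{\mathfrak m})$ is friendly; as $\mathfrak m$ describes this same operation, the operation described by $\mathfrak m$ is friendly as well.

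I do not expect any genuine obstacle here: all the real work---proving that $\mathbf{op}$, $\mathbf{mod}$ and $\mathbf{desc}$ are bijections, verifying commutativity of the diagram, and showing that standardisation preserves both $1$-uniformity and the described operation---has already been carried out in Lemma~\ref{lemma-bij}, Lemma~\ref{lemma-friendly}, Corollary~\ref{cor-root}, Proposition~\ref{th-diag} and Lemma~\ref{lemma-can}. The only point demanding a moment's care is to notice that the first half concerns the restricted class $\mathcal M_k$ of \emph{standard} modifiers, where $\mathbf{desc}$ is bijective so that uniqueness holds, whereas the converse quantifies over \emph{all} $1$-uniform friendly modifiers; one must therefore pass through $\cf{\mathfrak m}$ rather than through $\mathfrak m$ itself in order to land inside $\mathcal M_k$.
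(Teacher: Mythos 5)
Your proposal is correct and follows exactly the route the paper intends: the paper states the theorem as an immediate consequence of Proposition~\ref{th-diag} (bijectivity of $\mathbf{desc}$ on $\mathcal M_k$, giving existence and uniqueness) and Lemma~\ref{lemma-can} (passing from an arbitrary $1$-uniform friendly modifier to its standardisation $\cf{\mathfrak m}$ for the converse). Your write-up simply makes explicit the bookkeeping the paper leaves to the reader.
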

\section{On the state complexity of friendly operations\label{sec-sc}}
We know that the state complexity of the square root operation \cite{CHLP20} is $sc_{\sqrt{}}(n)=n^n-\binom{n}{2}$, and that it is equal to the state complexity of the operation $\mathbf{Root}$ \cite{KLS05}. 
However, our construction of standard modifiers (Definition \ref{def-stan}) gives us obviously an upper bound of $sc_{\otimes}(n)\leq n^n$ for any unary friendly operation $\otimes$. This raises the question of whether the state complexity of some unary friendly operation reaches this bound and, if not, whether one can give an explicit tight bound. 
 Similar questions arise for the general case of $k$-ary friendly operations with the upper bound $\sc_{\ugh_k}(n_1,\ldots,n_k)\leq\prod\limits_{j=1}^k n_j^{n_j}$  deduced from Definition \ref{def-stan}.
To answer these questions, we use the notion of \emph{monsters} defined in \cite{CHLP20},  the definition of which is recalled  below in a specific case.
\begin{definition}\label{def-mon}
 For any $k$-tuple of positive integers $\underline n$, we denote by $\underline{\mathrm{M}}^{\underline n}$ the $k$-tuple of automata with, for all $j\in\{1,\ldots,k\}$, \[\mathrm M^{\underline n}_j=(\{0,\ldots,n_1-1\}^{\{0,\ldots,n_1-1\}}\times\cdots\times\{0,\ldots,n_k-1\}^{\{0,\ldots,n_k-1\}},\{0,\ldots,n_j-1\},0, \{n_j-1\},\delta_j)\] where, for all $\underline \phi\in \{0,\ldots,n_1-1\}^{\{0,\ldots,n_1-1\}}\times\cdots\times\{0,\ldots,n_k-1\}^{\{0,\ldots,n_k-1\}}$, $\delta_j^{\underline\phi}=\phi_j$.
\end{definition}
\begin{example}
If $\underline n=(2)$, the sequence $\mathrm{M}^{(2)}$ is the DFA below
\begin{center}
\begin{tikzpicture}[node distance=3cm]
\node[state,initial above](p0){$0$};{}
\node[state,accepting](p1) [right of=p0] {$1$};
\path[->]
(p0)edge[loop left] node [swap]{$[01],[00]$} (p0)
(p0)edge[bend left] node {$[11],[10]$} (p1)
(p1)edge[loop right] node [swap]{$[01],[11]$} (p1)
(p1)edge[bend left] node{$[00],[10]$}(p0);
\end{tikzpicture}
\end{center}
In this representation, each symbol $[ab]$ denotes the word of the image of a function, \emph{i.e.} the function $g$ from $\{0,1\}$ into $\{0,1\}$ such that $g(0)=a$ and $g(1)=b$. Each function $g$ is associated to a single letter, the transition function of which is $g$.
\end{example}
\subsection{The unary case}\label{una-bound}
We show that the  bound $n^n$ is not tight by the state complexity of friendly operations and we give, and prove, an explicit tight bound.

Consider any subset $E\subseteq \mathcal U_{1}$, and let $\otimes=\mathbf{op}(E)$ and $\mathfrak m=\mathbf{mod}(E)$. Let $A=(\Sigma,Q,i,F,\alpha)$ be a DFA with size $n\in\mathbb N\setminus 0$. We show that the size of the minimal DFA equivalent to $\mathfrak m A$ is at most $n^n-n+1$.
  For all $s,t \in Q$, let $g_{s,t}\in Q^Q$ such that for all $j\in F$, $g_{s,t}(j)=s$ and for all $j\notin F$, $g_{s,t}(j)=t$. The Nerode equivalence on the states of  $\mathfrak mA$ splits the set of the $n^{2}$ functions $g_{s,t}$ into at most $n^{2}-n+1$ classes. The detailed proof of this assertion is given in appendix. This implies that $\sc_{\otimes}(n)\leq n^{n}-n+1$, which gives us the upper bound.

We now show that this bound is tight for $\ugh_1=\mathbf{op}(\{\mathbf 0,\mathbf 0^{1}\})$, where  $\mathbf 0=(0,0,\dots)$ and  $\mathbf 0^{1}=(0,1,1,\dots,1,\dots)$. Notice that $\ugh_{1}(L)=\mathbf{Root}(L)^{c}\cup\{w\in\Sigma^{*}\mid w\in\sqrt[k]{L}\mbox{ for any }k>0\}$ if $\varepsilon\not\in L$ and $\emptyset$ otherwise. Let $\mathfrak w_1=\mathbf{mod}(\{\mathbf 0,\mathbf 0^{1}\})$. 
 We determine a lower bound for the state complexity of $\ugh_1$ by computing the minimal DFA equivalent to $\mathfrak w_1 \mathrm M^n$. By Definition \ref{def-mon} and \ref{def-stan}, we immediately see that the alphabet of $\mathfrak w_1\mathrm M^n$ is $\{0,\ldots,n-1\}^{\{0,\ldots,n-1\}}$, and that every state $\phi$ of $\mathfrak w_1 \mathrm M^n$ is in $\{0,\ldots,n-1\}^{\{0,\ldots,n-1\}}$ and is accessible from its initial state $\mathrm{Id}_{\{0,\ldots,n-1\}}$ by reading the letter $\phi$.
To compute the Nerode equivalence, we need the following  result.
\begin{lemma}\label{lemma-dis}
For any $n\in\mathbb N\setminus 0$, and any $\phi,\psi\in\{0,\ldots,n-1\}^{\{0,\ldots,n-1\}}$  such that $\psi$ is non-constant there exists $\zeta\in \{0,\ldots,n-1\}^{\{0,\ldots,n-1\}}$ such that $\chi^{\zeta\circ\phi}_{0,\{n-1\}}\in\{\mathbf 0,\mathbf 0^{1}\}$ if and only if $\chi^{\zeta\circ\psi}_{0,\{n-1\}}\not\in\{\mathbf 0,\mathbf 0^{1}\}$.
\end{lemma}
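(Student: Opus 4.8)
Throughout put $Q=\{0,\dots,n-1\}$ and write $\chi^g$ for $\chi^{g}_{0,\{n-1\}}$, so that $\chi^g_p=1$ exactly when $g^p(0)=n-1$. There is no non-constant map when $n=1$, so I may assume $n\ge 2$; note also that the conclusion can hold only when $\phi\neq\psi$ (otherwise $\zeta\circ\phi=\zeta\circ\psi$ for every $\zeta$), which I therefore assume. The first step is to reformulate membership in $\{\mathbf 0,\mathbf 0^1\}$ purely through the orbit $0,g(0),g^2(0),\dots$ of $g=\zeta\circ\phi$ (resp. $\zeta\circ\psi$): since $0\neq n-1$, one checks that $\chi^g\in\{\mathbf 0,\mathbf 0^1\}$ iff either the orbit of $0$ never meets $n-1$ (giving $\mathbf 0$) or $g(0)=g(n-1)=n-1$ (giving $\mathbf 0^1$); equivalently, $\chi^g\notin\{\mathbf 0,\mathbf 0^1\}$ iff the orbit reaches $n-1$ at some step yet is not of the form $0\to n-1\to n-1\to\cdots$. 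So I only need to build a single $\zeta$ driving one of the two orbits into a pattern that avoids $n-1$ or sticks at it immediately, and the other into a pattern that reaches $n-1$ either at a step $\ge 2$, or reaches it once and then leaves.

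The governing idea is that $g(q)=\zeta(\phi(q))$, so the values of $\zeta$ are free parameters letting me steer the orbit of $0$: if I build $\zeta$ so that the orbits of $\zeta\circ\phi$ and $\zeta\circ\psi$ share a common prefix reaching a point $r$ with $\phi(r)\neq\psi(r)$, I can then force them apart, because $\zeta$ can be prescribed \emph{independently} at the two distinct images $\phi(r)$ and $\psi(r)$. Concretely, I would fix a disagreement point $r$ (one exists as $\phi\neq\psi$), route the common prefix to $r$ — either $r=0$ already when $\phi(0)\neq\psi(0)$, or, when $\phi(0)=\psi(0)=:a$, by setting $\zeta(a)=r$ so both orbits start $0\to r$ — and then set $\zeta(\phi(r))$ and $\zeta(\psi(r))$ to send the two continuations to incompatible fates: one to a value whose orbit avoids $n-1$ (a fixed point at $0$ when $r=0$, or the $2$-cycle $0\to r\to 0$ otherwise, hence $\mathbf 0$, so $P=1$), the other to $n-1$ (producing $\chi=0,\dots,0,1,\dots\notin\{\mathbf 0,\mathbf 0^1\}$, so $P=0$). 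Since $\phi(r)\neq\psi(r)$, these two prescriptions never clash with each other.

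The hard part will be the coincidences among the few distinguished states $0$, $n-1$, $a=\phi(0)=\psi(0)$, and the images $\phi(r),\psi(r)$, which can force $\zeta$ to take conflicting values. I would resolve them as follows. Whenever some disagreement point $r\neq n-1$ exists I choose it, which keeps the ``safe'' cycle away from $n-1$; as $\phi(r)\neq\psi(r)$, at most one of these images equals $a$, and if, say, $\phi(r)=a$ then the routing value $\zeta(a)=r$ already makes the $\phi$-orbit stick at $r$ (still $P=1$), so only the $\psi$-side must be aimed at $n-1$ — and, using that the conclusion ``exactly one of the two lies in $\{\mathbf 0,\mathbf 0^1\}$'' is symmetric in $\phi$ and $\psi$, I pick which side sticks and which reaches $n-1$ so as to dodge the remaining clash. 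The genuinely special case is when $\phi$ and $\psi$ disagree \emph{only} at $n-1$: then I route $0\to n-1$ (via $\zeta(a)=n-1$, or directly when $\phi(0)\neq\psi(0)$) and exploit $\phi(n-1)\neq\psi(n-1)$ by making one side stick at $n-1$ ($\mathbf 0^1$, $P=1$) and the other leave it ($\chi=0,1,0,\dots$, $P=0$), again orienting the choice to avoid the lone possible conflict $a\in\{\phi(n-1),\psi(n-1)\}$ (both cannot hold, since they would force $\phi(n-1)=\psi(n-1)$). In the residual collisions — notably when $\phi(0)\neq\psi(0)$ but $\psi(0)=\psi(n-1)$ — non-constancy of $\psi$ supplies an auxiliary state $c$ with $\psi(c)\neq\psi(0)$ through which I reroute the $\psi$-orbit before driving it to $n-1$; this is precisely where the hypothesis that $\psi$ is non-constant is indispensable, matching the fact that a constant $\psi$ yields $P(\zeta\circ\psi)=1$ for every $\zeta$ and so can never be separated.
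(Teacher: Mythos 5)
Your overall strategy is the same as the paper's: build $\zeta$ explicitly on the handful of relevant points, splitting on whether $\phi(0)=\psi(0)$ and on the degeneracies among $0$, $n-1$ and the images. Most of your cases go through (the subcase $\phi(0)=\psi(0)$ with a disagreement point $r\neq n-1$, and the subcase where $\phi$ and $\psi$ disagree only at $n-1$, are handled correctly and consistently). But there is one subcase where the construction you describe fails: $\phi(0)\neq\psi(0)$ with $\phi(0)=\phi(n-1)$ and $\psi(0)=\psi(n-1)$. Take $n=3$, $\phi$ the constant map with value $1$, and $\psi$ with $\psi(0)=\psi(2)=0$, $\psi(1)=1$. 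Your prescription is $\zeta(\phi(0))=0$ (fixed point, giving $\chi^{\zeta\circ\phi}_{0,\{n-1\}}=\mathbf 0$) together with $\zeta(\psi(0))=c$ and $\zeta(\psi(c))=n-1$ for some $c$ with $\psi(c)\neq\psi(0)$; here the only such $c$ is $1$, and $\psi(1)=1=\phi(0)$, so $\zeta(1)$ is required to equal both $0$ and $2$. Neither of your escape hatches applies: there is no alternative $c$, and the roles of $\phi$ and $\psi$ cannot be swapped, because a constant $\phi$ satisfies $\chi^{\zeta\circ\phi}_{0,\{n-1\}}\in\{\mathbf 0,\mathbf 0^{1}\}$ for \emph{every} $\zeta$ and so can never be the ``out'' side.

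The missing idea --- the one the paper uses for exactly this subcase --- is that the ``in'' outcome for $\phi$ need not be $\mathbf 0$: since $\phi(0)=\phi(n-1)$, setting $\zeta(\phi(0))=n-1$ forces $\chi^{\zeta\circ\phi}_{0,\{n-1\}}=\mathbf 0^{1}$, which is still inside $\{\mathbf 0,\mathbf 0^{1}\}$ and is now compatible with the requirement $\zeta(\psi(c))=n-1$ (both constraints assign the value $n-1$, so they never clash even if $\psi(c)=\phi(0)$). With $\zeta(\phi(0))=\zeta(\psi(c))=n-1$ and $\zeta(\psi(0))=c$ one gets $\chi^{\zeta\circ\psi}_{0,\{n-1\}}=(0,0,1,\dots)\notin\{\mathbf 0,\mathbf 0^{1}\}$, the needed inequality $c\neq n-1$ following from $\psi(c)\neq\psi(0)=\psi(n-1)$. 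Your plan only deploys the $\mathbf 0^{1}$ target in the ``disagree only at $n-1$'' case; once you also allow it here, the argument closes and coincides with the paper's case analysis.
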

\begin{proof}
 The first case we consider is the case where $\phi(0)\neq\psi(0)$. One of the function $\phi$ and $\psi$ is not constant. We assume that it is $\psi$. There exists $i$ such that $\psi(n-1)\neq\psi(i)$. If $\psi(0)\neq\psi(n-1)$ then we set $\zeta(\phi(0))=\zeta(\psi(n-1))=0$ and $\zeta(\psi(0))=n-1$ and this implies $\chi^{\zeta\circ\phi}_{0,\{n-1\}}=\mathbf 0$ and $\chi^{\zeta\circ\psi}_{0,\{n-1\}}=(0,1,0,\dots)\not\in\{\mathbf 0,\mathbf 0^{1}\}$.  Symmetrically, if $\phi(0)\neq\phi(n-1)$ then we obtain our result by permuting the role of $\psi$ and $\phi$ in the previous case. Now suppose that $\phi(0)=\phi(n-1)$ and $\psi(0)=\psi(n-1)$. We set $\zeta(\phi(0))=\zeta(\psi(i))=n-1$, and $\zeta(\psi(0))=i$ which implies  $\chi^{\zeta\circ\phi}_{0,\{n-1\}}=\mathbf 0^{1}$ and $\chi^{\zeta\circ\psi}_{0,\{n-1\}}=(0,0,1,\dots)\not\in\{\mathbf 0,\mathbf 0^{1}\}$.\\
 The first case we consider is the case where $\phi(0)\neq\psi(0)$. One of the function $\phi$ and $\psi$ is not constant. We assume that it is $\psi$. There exists $i$ such that $\psi(n-1)\neq\psi(i)$. If $\psi(0)\neq\psi(n-1)$ then we set $\zeta(\phi(0))=\zeta(\psi(n-1))=0$ and $\zeta(\psi(0))=n-1$ and this implies $\chi^{\zeta\circ\phi}_{0,\{n-1\}}=\mathbf 0$ and $\chi^{\zeta\circ\psi}_{0,\{n-1\}}=(0,1,0,\dots)\not\in\{\mathbf 0,\mathbf 0^{1}\}$.  Symmetrically, if $\phi(0)\neq\phi(n-1)$ then we obtain our result by permuting the role of $\psi$ and $\phi$ in the previous case. Now suppose that $\phi(0)=\phi(n-1)$ and $\psi(0)=\psi(n-1)$. We set $\zeta(\phi(0))=\zeta(\psi(i))=n-1$, and $\zeta(\psi(0))=i$ which implies  $\chi^{\zeta\circ\phi}_{0,\{n-1\}}=\mathbf 0^{1}$ and $\chi^{\zeta\circ\psi}_{0,\{n-1\}}=(0,0,1,\dots)\not\in\{\mathbf 0,\mathbf 0^{1}\}$.\\
  If  $\phi(0)=\psi(0)$ then there exists $j>0$ such that $\phi(j)\neq\psi(j)$. We have $\phi(j)\neq\phi(0)$ or $\psi(j)\neq\psi(0)$. Suppose that $\phi(j)\neq\phi(0)$ (the other case being treated symmetrically). If $j<n-1$ then we set $\zeta(\phi(0))=\zeta(\psi(j))$, and $\zeta(\phi(j))=n-1$. In that case $\chi^{\zeta\circ\phi}_{0,\{n-1\}}=(0,0,1,\dots)\not\in\{\mathbf 0,\mathbf 0^{1}\}$ and $\chi^{\zeta\circ\psi}_{0,\{n-1\}}=\mathbf 0$. Finally if $j=n-1$ then we set $\zeta(\phi(0))=\zeta(\psi(n-1))=n-1$ and $\zeta(\phi(n-1))=0$ which implies $\chi^{\zeta\circ\phi}_{0,\{n-1\}}=(0,1,0,\dots)\not\in\{\mathbf 0,\mathbf 0^{1}\}$ and $\chi^{\zeta\circ\psi}_{0,\{n-1\}}=\mathbf 0^{1}$. This ends the proof.\cqfd
\end{proof}
By Definition \ref{def-root}, the above lemma implies that any two distinct states of $\mathfrak w_1\mathrm M^n$ such that at least one of them is non-constant are distinguishable. Therefore, any non-constant state is distinguishable from every other state and the size of the minimal DFA associated to $\mathfrak w_1\mathrm M^n$ is at least equal to the cardinality of the set of functions of $\{0,\ldots,n-1\}^{\{0,\ldots,n-1\}}$ that are not constant. Thus, for every $n\in\mathbb N\setminus 0$, the size of the minimal DFA equivalent to $\mathfrak w_{1} \mathrm M^{n}$ is at least $n^n-n+1$ , and so we have $\sc_{\ugh_{1}}(n)\geq n^n-n+1$. As a consequence,
\begin{theorem}\label{th-unitary}
  For any friendly operation $\otimes$, $\sc_{\otimes}(n)\leq n^{n}-n+1$ and the bound is tight for $\ugh_{1}$.
\end{theorem}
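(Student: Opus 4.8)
The plan is to prove the two halves of Theorem~\ref{th-unitary} separately: the universal upper bound $\sc_{\otimes}(n)\leq n^n-n+1$ for every unary friendly operation, and the matching lower bound for the specific witness operation $\ugh_1$. Both halves have already been reduced to concrete combinatorial facts in the text preceding the statement, so my job is mainly to assemble them cleanly.

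For the upper bound, fix an arbitrary $E\subseteq\mathcal U_1$ with $\otimes=\mathbf{op}(E)$ and $\mathfrak m=\mathbf{mod}(E)$, and take any DFA $A=(\Sigma,Q,i,F,\alpha)$ of size $n$. By Definition~\ref{def-stan} the states of $\mathfrak m A$ are the functions in $Q^Q$, giving the trivial bound $n^n$. The key observation, stated in the text, is to single out the $n^2$ functions $g_{s,t}$ (sending $F$ to $s$ and $Q\setminus F$ to $t$). First I would argue that every state reachable in $\widehat{\mathfrak m A}$ that matters for counting either already is some $g_{s,t}$ or is Nerode-equivalent to one: the finality of a state $\phi$ in $\mathfrak m A$ depends only on the characteristic sequence $\chi^{\phi}_{i,F}$ (Proposition~\ref{lemma-root}), and after reading one letter the successor of $\phi$ collapses $F$ and $Q\setminus F$ onto single images in a way governed only by where $F$ lands. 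Then the core calculation is that the Nerode equivalence identifies the $n^2$ functions $g_{s,t}$ into at most $n^2-n+1$ classes: the $n$ constant functions $g_{s,s}$ all behave identically (each collapses $Q$ to a single point, so all their iterates are constant and their characteristic sequences are eventually constant of the same shape), merging $n$ states into $1$ and thus saving $n-1$. Combining the collapse on the $g_{s,t}$ layer with the fact that non-$g_{s,t}$ states feed into it yields the global count $n^n-n+1$, whence $\sc_{\otimes}(n)\leq n^n-n+1$. Since this holds for every $E$, it holds for every friendly $\otimes$.

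For the lower bound I would use the witness $\ugh_1=\mathbf{op}(\{\mathbf 0,\mathbf 0^1\})$ acting on the monster $\mathrm M^n$, whose states (after taking the accessible part) are exactly all of $\{0,\dots,n-1\}^{\{0,\dots,n-1\}}$, each accessible by reading itself as a letter, as noted just before Lemma~\ref{lemma-dis}. The heart of the matter is Lemma~\ref{lemma-dis}, which I would invoke directly: for any $\phi\neq\psi$ with at least one of them non-constant, it produces a letter $\zeta$ distinguishing $\phi$ from $\psi$ under the finality test $\chi^{\zeta\circ(\cdot)}_{0,\{n-1\}}\in\{\mathbf 0,\mathbf 0^1\}$. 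Hence every non-constant function is pairwise distinguishable from every other state, so all $n^n-n$ non-constant functions occupy distinct Nerode classes; adding the single class holding the $n$ constant functions (which are mutually indistinguishable here) gives at least $n^n-n+1$ classes. Therefore $\sc_{\ugh_1}(n)\geq n^n-n+1$, matching the upper bound and proving tightness.

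I expect the main obstacle to lie in the upper-bound half rather than the lower-bound half, since the latter is essentially a citation of Lemma~\ref{lemma-dis} plus counting. Specifically, the delicate point is justifying that for counting minimal-DFA states it genuinely suffices to analyze the $g_{s,t}$ layer, i.e.\ that the Nerode class of an \emph{arbitrary} accessible state $\phi$ coincides with that of an appropriate $g_{s,t}$ determined by the pair $(\phi(s_0),\phi(t_0))$ for the relevant images. This hinges on showing that two states have the same future behavior under $\mathfrak m$ exactly when their characteristic sequences agree for all continuations, and that this data is captured by the action on the two blocks $F$ and $Q\setminus F$; I would make this precise by tracking how the standard transition $\delta^a(\psi)=\alpha^a\circ\psi$ interacts with finality and reducing the distinguishing test to the behavior of $g_{s,t}$-type functions. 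The bookkeeping that the constant functions merge into one class (and no other savings are double-counted) is the part I would verify most carefully, as it is precisely what yields the $-n+1$ correction. \cqfd
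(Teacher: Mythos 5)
Your lower-bound half is correct and coincides with the paper's argument: Lemma \ref{lemma-dis} makes every non-constant state of $\mathfrak w_1\mathrm M^n$ distinguishable from every other state, every $\zeta$ is accessible by reading the letter $\zeta$, and counting $(n^n-n)$ singleton classes plus at least one class for the constants gives $\sc_{\ugh_1}(n)\geq n^n-n+1$.

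The upper-bound half has a genuine gap at its central step. You claim that the $n$ constant functions $g_{s,s}$ ``all behave identically'' because their characteristic sequences are ``eventually constant of the same shape'', and that this one identification produces the $-n+1$. This is false for general $E$. Assuming $i\notin F$, one computes $\chi^{g_{s,s}}_{i,F}=\mathbf 0$ when $s\notin F$ but $\chi^{g_{s,s}}_{i,F}=\mathbf 0^{1}=(0,1,1,\dots)$ when $s\in F$; these are \emph{different} sequences, so if $E$ contains exactly one of $\mathbf 0,\mathbf 0^{1}$ (take $E=\{\mathbf 0\}$ and $A=\mathrm M^2$ with $F=\{1\}$), the constant states $g_{1,1}$ and $g_{0,0}$ have different finality and are distinguished already by the empty word. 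Hence the diagonal does not always collapse, and your argument gives no savings in that case. The paper instead runs a case analysis on $E_1=E\cap\{\mathbf 0,\mathbf 0^{1},\mathbf{odd}\}$, exploiting that $G=\{g_{s,t}\}$ is closed under the transition action ($\zeta\circ g_{s,t}=g_{\zeta(s),\zeta(t)}$) and that $\chi^{g_{s,t}}_{i,F}$ equals $\mathbf 0$ iff $t\notin F$, equals $\mathbf 0^{1}$ iff $s,t\in F$, and equals $\mathbf{odd}$ iff $t\in F$ and $s\notin F$. The identification yielding at least $n-1$ merged states differs by case: if $\#E_1\in\{0,3\}$ all of $G$ collapses; if finality on $G$ is governed by membership of $\mathbf 0$ one gets $g_{s,t}\sim g_{s',t}$; if by $\mathbf 0^{1}$ one gets $g_{s,t}\sim g_{t,s}$; and only when it is governed by $\mathbf{odd}$ is it the diagonal that collapses (the paper's write-up appears to swap the $\mathbf 0$ and $\mathbf{odd}$ subcases, but the case structure is the essential content). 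Without this case split the bound $\sc_{\otimes}(n)\leq n^n-n+1$ is not established. A secondary point: the count is simply $(n^n-n^2)$ states outside $G$ plus at most $n^2-n+1$ Nerode classes inside $G$; your framing that other states are ``equivalent to'' or ``feed into'' some $g_{s,t}$ is neither needed nor true, since $\alpha^a\circ\phi$ is generally not of the form $g_{s,t}$.
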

\begin{proof}
  Consider any subset $E\subseteq \mathcal U$. Let $\otimes=\mathbf{op}(E)$ and $\mathfrak m=\mathbf{mod}(E)$. Let $A=(\Sigma,Q,i,F,\alpha)$ be a DFA with size $n\in\mathbb N\setminus 0$. We show that $\sc(\mathrm{L}(\mathfrak m A))$ is at most $n^n-n+1$. We first suppose that $i\notin F$. 
  For all $s,t \in Q$, let $g_{s,t}\in Q^Q$ such that for all $j\in F$, $g_{s,t}(j)=s$ and for all $j\notin F$, $g_{s,t}(j)=t$. When $t\in F$ we have  $\chi_{i,F}^{g_{s,t}}=\mathbf 0^{1}$ if $s\in F$  and  $\chi_{i,F}^{g_{s,t}}=\mathbf {odd}=(0,1,0,1,\dots,n\mod 2,\dots)$ otherwise. Furthermore if $t\not\in F$ then  $\chi_{i,F}^{g_{s,t}}=\mathbf {0}$.  Let $G=\{g_{s,t}\mid s,t\in Q\}$. We remark $G$ is stable by external composition that is for any $g_{s,t}\in G$, we have $\zeta\circ g_{s,t}=g_{\zeta(s),\zeta(t)}\in G$.
  \\ 
  Let $E_{1}=\{\mathbf 0,\mathbf 0^{1},\mathbf{odd}\}\cap E$ and $E_{2}=\{\mathbf 0,\mathbf 0^{1},\mathbf{odd}\}\setminus E_{1}$.\\
  If $\#E_{1}=0$ (resp. $\#E_{1}=3$) then for any $s,t\in Q$ then, since  $\chi^{ g_{s,t}}_{i,F}\notin E$ (resp. $\chi^{ g_{s,t}}_{i,F}\in E$) , the state $g_{s,t}$ is not final (resp. final). Since, $G$ is stable by external composition, all the states in $G$ are in the same  class for the Nerode equivalence. So $\sc(\mathrm{L}(\mathfrak m A))\leq n^{n}-n^{2}+1\leq n^n-n+1$. \\
%
 If $\#E_{1}=1$ (resp. $\#E_{1}=2$) then we denote by $u$ the unique element of $\#E_{1}$ (resp. $\#E_{2}$). If $u=\mathbf {odd}$ then by remarking that, for any $s,s',t\in Q$ we have  $\chi_{i,F}^{g_{s,t}}=\mathbf{odd}$ if and only if $\chi_{i,F}^{g_{s',t}}=\mathbf{odd}$, the stability of $G$ implies  that any two states $g_{s,t}$ and $g_{s',t}$ are not distinguishable in $\mathfrak mA$. As a consequence,  $\sc(\mathrm{L}(\mathfrak m A))\leq n^{n}-n(n-1)\leq n^n-n+1$. If $u=\mathbf 0^{1}$ then, by remarking that  $\chi_{i,F}^{g_{s,t}}=\mathbf 0^{1}$ if and only if $\chi_{i,F}^{g_{t,s}}=\mathbf 0^{1}$, the stability of $G$ implies that the states $g_{s,t}$ and $g_{t,s}$ are not distinguishable for any $s,t\in Q$. So, we have $\sc(\mathrm{L}(\mathfrak m A))\leq n^{n}-\frac12n(n-1)\leq n^n-n+1$.
  Finally, consider the case where  $u=\mathbf 0$. By remarking that $\chi^{g_{s,s}}_{i,F}=\mathbf 0$, the stability of $G$ implies that  any two  states $g_{s,s}$ and $g_{s',s'}$ are not distinguishable in $\mathfrak mA$. So,  $\sc(\mathrm{L}(\mathfrak m A))\leq n^n-n+1$. \\
  Now assume that $i\in F$.  Then when $s\not\in F$ we have $\chi^{g_{s,t}}_{i,F}=\mathbf 1^{0}=(1,0,0,\dots)$ if $t\not\in F$ and $\chi^{g_{s,t}}_{i,F}=\mathbf{even}=(1,0,1,0,\dots,n+1\mod 2,\dots)$ otherwise. Furthermore if $s\in F$ we have $\chi^{g_{s,t}}_{i,F}=\mathbf 1=(1,1,\dots)$. The proof goes symmetrically to the case where $i\not\in F$, by exchanging the role of the state $s$ and $t$ in the proof, and replacing all the occurrences of $\mathbf 0$ by $\mathbf 1$, all the occurrences of $\mathbf 0^{1}$ by $\mathbf 1^{0}$, and all the occurrences of $\mathbf  {odd}$ by $\mathbf {even}$.\\
  To summarize, in all the cases $\sc(L(\mathfrak m A))\leq n^{n}-n+1$ and so $\sc_{\otimes}(n)\leq n^{n}-n+1$ for any friendly unary operation $\otimes$.\\
  As a consequence of Lemma \ref{lemma-dis}, any non constant state is distinguishable from every other state. Hence, for every $n\in\mathbb N\setminus 0$, the size of the minimal DFA equivalent to $\mathfrak w_{1} \mathrm M^{n}$ is at least $n^n-n+1$ , and so we have $\sc_{\ugh_{1}}(n)\geq n^n-n+1$.
	Indeed, there are $n$ constant functions in $\{0,\ldots,n-1\}^{\{0,\ldots,n-1\}}$, and  every state $\zeta$ of $\mathfrak w_{1}\mathrm M^{n}$ is accessible from the initial state $\mathrm{Id}_{\{0,\ldots,n-1\}}$ by the letter $\zeta$. It follows that $\sc_{\ugh_{1}}(n)=n^{n}-n+1$. This ends the proof.\cqfd
\end{proof}	
\subsection{The general case}
Surprisingly, unlike the unary case, we show that there are friendly operations which state complexity meet the  upper bound $\prod\limits_{j=1}^kn_j^{n_j}$. We exhibit an operation $\ugh_{k}$ and a witness such that the DFA obtained by acting on a witness by its associated standard modifier is minimal. We assume that $k\geq 2$,
  and set $\ugh_{k}=\mathbf{op}(E_{k})$ with $E_{k}=\{\mathbf 0,\mathbf 0^{1}\}^{k}\setminus (\mathbf 0,\dots,\mathbf 0)$.
By 
using Lemma \ref{lemma-dis}, we  prove that, for any $\phi,\psi\in\{0,\ldots,n_1-1\}^{\{0,\ldots,n_1-1\}}\times\cdots\times\{0,\ldots,n_k-1\}^{\{0,\ldots,n_k-1\}}$ with $\phi\neq\psi$, there exists $\underline\zeta\in\{0,\ldots,n_1-1\}^{\{0,\ldots,n_1-1\}}\times\cdots\times\{0,\ldots,n_k-1\}^{\{0,\ldots,n_k-1\}}$ such that $\chi_{(0,\ldots,0),(\{n_{1}-1\},\dots,\{n_{k}-1\})}^{\underline\zeta\circ\underline\phi}\in E_k$ if and only if $\chi_{(0,\ldots,0),(\{n_{1}-1\},\dots,\{n_{k}-1\})}^{\underline\zeta\circ\underline\psi}\notin E_k$. We thus have :
\begin{theorem}\label{th-gen}
  For any $\underline n\in(\mathbb N\setminus 0)^k$, $\sc_{\ugh_k}(\underline n)=\prod\limits_{j=1}^kn_j^{n_j}$ and $(\underline{\mathrm M}^{\underline n})_{\underline n\in(\mathbb N\setminus 0)^k}$ is a witness for $\ugh_k$.
\end{theorem}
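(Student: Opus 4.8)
The plan is to show that $\mathfrak w_k=\mathbf{mod}(E_k)$, applied to the monster $\underline{\mathrm M}^{\underline n}$, produces a DFA that is already minimal with $\prod_{j=1}^k n_j^{n_j}$ states; combined with the general upper bound this gives the equality. By Lemma \ref{lemma-friendly} the standard modifier $\mathfrak w_k$ describes $\ugh_k$, and by Definition \ref{def-stan} its set of states is $\prod_{j=1}^k\{0,\dots,n_j-1\}^{\{0,\dots,n_j-1\}}$, of cardinality $\prod_{j=1}^k n_j^{n_j}$, which is precisely the upper bound. So it suffices to prove that every state is accessible and that distinct states are distinguishable (I work under $n_j\ge 2$, the only relevant regime for the witness). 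Accessibility is immediate: the initial state is $(\mathrm{Id},\dots,\mathrm{Id})$ and reading the monster letter $\underline\phi$ sends it to $\underline\phi\circ(\mathrm{Id},\dots,\mathrm{Id})=\underline\phi$, so each state is reached by a single letter.

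For distinguishability I establish the claim displayed just before the theorem. Since a state $\underline\eta$ of $\mathfrak w_k\underline{\mathrm M}^{\underline n}$ is final exactly when $\chi^{\underline\eta}\in E_k$ (characteristic sequences being taken in the state configuration of $\underline{\mathrm M}^{\underline n}$, which I leave implicit in the notation), it is enough, for $\underline\phi\neq\underline\psi$, to produce a single letter $\underline\zeta$ with $\chi^{\underline\zeta\circ\underline\phi}\in E_k$ if and only if $\chi^{\underline\zeta\circ\underline\psi}\notin E_k$. Fix a coordinate $j_0$ with $\phi_{j_0}\neq\psi_{j_0}$. The key remark is that a constant map of value $c$ has characteristic sequence $\mathbf 0^1$ when $c=n_{j_0}-1$ and $\mathbf 0$ otherwise, both lying in $\{\mathbf 0,\mathbf 0^1\}$; hence membership in $\{\mathbf 0,\mathbf 0^1\}$ alone cannot separate two constant maps, which forces a case split on whether one of $\phi_{j_0},\psi_{j_0}$ is non-constant.

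If one of them is non-constant, Lemma \ref{lemma-dis} applied at coordinate $j_0$ (exchanging the two maps if needed, harmless since the conclusion is a biconditional) yields $\zeta_{j_0}$ with exactly one of $\chi^{\zeta_{j_0}\circ\phi_{j_0}}_{0,\{n_{j_0}-1\}}$, $\chi^{\zeta_{j_0}\circ\psi_{j_0}}_{0,\{n_{j_0}-1\}}$ in $\{\mathbf 0,\mathbf 0^1\}$. On each remaining coordinate $\ell$ I pick $\zeta_\ell$ mapping $\phi_\ell(0),\phi_\ell(n_\ell-1),\psi_\ell(0),\psi_\ell(n_\ell-1)$ all to $n_\ell-1$, which forces $\chi^{\zeta_\ell\circ\phi_\ell}_{0,\{n_\ell-1\}}=\chi^{\zeta_\ell\circ\psi_\ell}_{0,\{n_\ell-1\}}=\mathbf 0^1$. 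Since $k\geq 2$ such a coordinate exists, so for each of $\underline\phi,\underline\psi$ the full tuple lies in $\{\mathbf 0,\mathbf 0^1\}^k$ and is never $(\mathbf 0,\dots,\mathbf 0)$; it therefore belongs to $E_k$ exactly when its $j_0$-component lies in $\{\mathbf 0,\mathbf 0^1\}$, and Lemma \ref{lemma-dis} gives the separation. Otherwise $\phi_{j_0}\equiv a$ and $\psi_{j_0}\equiv b$ with $a\neq b$; setting $\zeta_{j_0}(a)=n_{j_0}-1$ and $\zeta_{j_0}(b)=0$ gives $\chi^{\zeta_{j_0}\circ\phi_{j_0}}_{0,\{n_{j_0}-1\}}=\mathbf 0^1$ and $\chi^{\zeta_{j_0}\circ\psi_{j_0}}_{0,\{n_{j_0}-1\}}=\mathbf 0$, and taking $\zeta_\ell\equiv 0$ on every other coordinate makes both $\underline\phi$ and $\underline\psi$ have component $\mathbf 0$ there. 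Then the $\underline\phi$-tuple lies in $\{\mathbf 0,\mathbf 0^1\}^k\setminus\{(\mathbf 0,\dots,\mathbf 0)\}=E_k$, while the $\underline\psi$-tuple equals $(\mathbf 0,\dots,\mathbf 0)\notin E_k$.

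In both cases $\underline\zeta$ separates $\underline\phi$ and $\underline\psi$, so $\mathfrak w_k\underline{\mathrm M}^{\underline n}$ is accessible with $\prod_{j=1}^k n_j^{n_j}$ pairwise distinguishable states and is already minimal, whence $\sc_{\ugh_k}(\underline n)=\prod_{j=1}^k n_j^{n_j}$. Each component $\mathrm M^{\underline n}_j$ has state complexity exactly $n_j$ — its states are reached by constant letters and separated by letters sending one state to $n_j-1$ and the other elsewhere — so $(\underline{\mathrm M}^{\underline n})$ is a genuine witness. I expect distinguishability to be the main obstacle, and the decisive point within it, with no counterpart in the unary bound $n^n-n+1$, is that the clause forbidding $(\mathbf 0,\dots,\mathbf 0)$ in $E_k$ is exactly what lets two distinct constant maps at coordinate $j_0$ be separated, by using the freedom on the remaining $k-1\geq 1$ coordinates to push the tuple across the boundary of $E_k$.
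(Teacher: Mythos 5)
Your proof is correct and follows essentially the same route as the paper: the same reduction to finding a single separating letter $\underline\zeta$, the same case split at a differing coordinate (both maps constant versus at least one non-constant), the same invocation of Lemma \ref{lemma-dis} in the second case, and the same device of padding the remaining coordinates with constant functions ($0$ to force $\mathbf 0$, $n_\ell-1$ to force $\mathbf 0^1$) so that membership in $E_k$ is decided by the distinguished coordinate. The only additions are your explicit checks of accessibility and of the state complexity of each monster component, which the paper leaves implicit.
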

\begin{proof}
It suffices to show that for any  $\underline\phi\neq\underline\psi$  there exists $\underline\zeta$ such that $\chi_{(0,\dots,0),(\{n_{1}-1\},\dots,\{n_{k}-1\})}^{\underline\zeta\circ\underline\phi}\in E_{k}$ if and only if $\chi_{(0,\dots,0),(\{n_{1}-1\},\dots,\{n_{k}-1\})}^{\underline\zeta\circ\underline\psi}\not\in E_{k}$.\\
Let $\ell$ such that $\phi_{\ell}\neq\psi_{\ell}$. We have to consider two cases:\\
If both $\phi_{\ell}$ and $\psi_{\ell}$ are constant functions then we set $\zeta_{\ell}(\phi(0))=0$ and $\zeta_{\ell}(\psi(0))=n_{\ell}-1$ and for any $i\neq \ell$ we choose $\zeta_{i}$ as the constant function sending any element to $0$.
 So we have   $\chi_{(0,\dots,0),(\{n_{1}-1\},\dots,\{n_{k}-1\})}^{\underline \zeta\circ\underline \phi}=(\mathbf 0,\dots,\mathbf 0)\not\in E_{k}$ 
 and  $\chi_{(0,\dots,0),(\{n_{1}-1\},\dots,\{n_{k}-1\})}^{\underline \zeta\circ\underline\psi}=(\mathbf 0,\dots,\mathbf 0,\mathbf 0^1,\mathbf 0,\dots,\mathbf 0)\in E_{k}$. \\
 If one of the functions $\phi$ and $\psi$ then we can use Lemma \ref{lemma-dis} and deduce that there exists a function $\zeta$ such that 
  $\chi_{0,\{n_{\ell}-1\}}^{\zeta\circ\phi_{\ell}}\in \{\mathbf 0,\mathbf 0^1\}$ if and only if$\chi_{0,\{n_{\ell}-1\}}^{\zeta\circ\psi_{\ell}}\not\in \{\mathbf 0,\mathbf 0^1\}$. We assume that  $\chi_{0,\{n_{\ell}-1\}}^{\zeta\circ\phi_{\ell}}\in \{\mathbf 0,\mathbf 0^1\}$ (the other case being obtained symmetrically). We choose each $\zeta_{i}$, with $i\neq \ell$, as the constant function sending any element to $n_{i}-1$ and $\zeta_{\ell}=\zeta$. We have
   $\chi_{(0,\dots,0),(\{n_{1}-1\},\dots,\{n_{k}-1\})}^{\underline \zeta\circ\underline\psi}=(\mathbf 0^1,\dots,\mathbf 0^1, \chi_{0,\{n-1\}}^{\zeta\circ\psi},\mathbf 0^1,\dots,\mathbf 0^1) \not\in E_{k}$,
    since $ \chi_{0,\{n_{\ell}-1\}}^{\zeta\circ\psi_{\ell}}\not\in\{\mathbf 0,\mathbf 0^1\}$, 
    and $\chi_{(0,\dots,0),(\{n_{1}-1\},\dots,\{n_{\ell}-1\})}^{\underline \zeta\circ\underline \phi}=(\mathbf 0^1,\dots,\mathbf 0^1, \chi_{0,\{n_{\ell}-1\}}^{\zeta\circ\phi},\mathbf 0^1,\dots,\mathbf 0^1) \in E_{k}$ because  $ \chi_{0,\{n_{\ell}-1\}}^{\zeta\circ\phi_{\ell}}\in\{\mathbf 0,\mathbf 0^1\}$. This ends the proof.\cqfd
    \end{proof}
Notice that in Theorems \ref{th-unitary} and \ref{th-gen}, the size of the alphabet, depending on $\underline n$, is unbounded.  However, the states of the resulting automata are indexed by $k$-tuples of functions and the transitions are completely described by the point to point compositions of transformations. Each monoid of transformation being generated by $3$ elements, we can restrict the alphabet by choosing only letters corresponding to  generators of $\{0,\ldots,n_{1}-1\}^{\{0,\ldots,n_{1}-1\}}\times\cdots\times\{0,\ldots,n_{k}-1\}^{\{0,\ldots,n_{k}-1\}}$, and  simulate the other  transition functions with by composition because the modifier is friendly. So the bounds remain tight for alphabets of size  $3k$.
\section{Conclusion}
We found a tight bound for the state complexity of friendly modifiers, which gives us a tight bound for the state complexity of infinite unions of intersections of roots. It is very probable that reducing the size of the alphabet of friendly modifiers to two would lead to another bound on their state complexity, which remains to be found.\\
Our future works could be to find other interesting classes of modifiers that are stable by composition, so that their study may lead to interesting new results. Furthermore they could begin to paint an interesting general picture that we would use as a basis for finding more general theorems on state complexity.

\bibliography{../COMMONTOOLS/biblio}

\end{document}